\newtheorem{theorem}{Theorem}
\newtheorem{lemma}[theorem]{Lemma}
\newcommand{\eq}[1]{\begin{align}#1\end{align}}
\begin{document}

\title{Universal scattering with general dispersion relations
}
\author{Yidan Wang}
 \affiliation{Joint Quantum Institute, NIST/University of Maryland, College Park, Maryland 20742 USA}
\author{ Michael J. Gullans}
 \affiliation{Joint Center for Quantum Information and Computer Science, NIST/University of Maryland, College Park, Maryland 20742 USA}
 \author{Xuesen Na}
 \affiliation{Department of Mathematics, University of Maryland, College Park, Maryland 20742, USA}
 \author{{\color{black}Seth Whitsitt}}
 \affiliation{Joint Quantum Institute, NIST/University of Maryland, College Park, Maryland 20742 USA}
 \affiliation{Joint Center for Quantum Information and Computer Science, NIST/University of Maryland, College Park, Maryland 20742 USA}

\author{Alexey V. Gorshkov}
\affiliation{Joint Quantum Institute, NIST/University of Maryland, College Park, Maryland 20742 USA}
 \affiliation{Joint Center for Quantum Information and Computer Science, NIST/University of Maryland, College Park, Maryland 20742 USA}

\begin{abstract}
Many synthetic quantum systems allow particles to have  dispersion relations that are neither linear nor quadratic functions.   
{\color{black} Here,  we explore single-particle scattering in general spatial dimension $D\geq 1$ when the density of states diverges at a specific energy. To illustrate the underlying principles in an experimentally relevant setting, we focus on waveguide quantum electrodynamics (QED) problems (i.e.~$D=1$) with dispersion relation $\epsilon(k)=\pm |d|k^m$, where $m\geq 2$ is an integer.} 
   For a large class of these problems for any positive integer $m$, we rigorously prove that when 
   there are no bright zero-energy eigenstates,  the $S$-matrix evaluated at an energy  $E\to 0$ converges to a universal limit that is only dependent on $m$.
   We also give a generalization of a key index theorem in quantum scattering theory known as Levinson's theorem---which relates the scattering phases to the number of bound states---{\color{black}to waveguide QED scattering} for these more general dispersion relations. {\color{black} We then extend these results to  general integer dimensions $D \geq 1$, dispersion relations $\epsilon(\bm{k}) =  |\bm{k}|^a$ for a $D$-dimensional momentum vector $\bm{k}$ with any real positive $a$,  and separable potential scattering.}
\end{abstract}

\maketitle

The quantum mechanical scattering of few-body  systems remains a challenging theoretical problem.  Even at low incoming energies, nonperturbative effects render a general solution out of reach. 
A common workaround is based on effective field theory whereby low-energy scattering is described in terms of a few  parameters such as the scattering length $a_0$ and the effective range $r_0$ \cite{bethe1949theory, braaten2006universality, braaten2001three}.  
When $a_0\gg r_0$, the system is in the unitarity limit where the universal physics of Efimov states \cite{efimov1973energy, braaten2006universality, braaten2001three} and unitary Fermi gases \cite{thomas2005virial, nishida2008universal, nascimbene2010exploring} can emerge.  
Another approach where general results can be obtained is by studying the analytic structure of the $S$-matrix at low energies.  One striking result in this context is the simple effect of dimensionality on scattering theory. 
Two particles with short-range interactions perfectly reflect off each other at the threshold in one dimension (1D),  while they transmit without seeing each other in higher dimensions.  This effect arises because the density of states diverges at the threshold as {\color{black}$1/\sqrt{E}$} in 1D, but stays finite in higher dimensions.

Recent experimental progress in synthetic quantum matter allows for broad control of dispersion relations. 
 One class of such systems consists of tunable periodic structures, including photonic crystal waveguides \cite{hughes2004enhanced, Joannopoulos:08:Book, Hung2013,Goban2014, Goban2015, Hood2016, Lodahl2015}, twisted bilayer graphene \cite{tarnopolsky2019origin, cao2018unconventional}, superconducting qubit arrays \cite{VanLoo2013, Devoret2013, sundaresan2019interacting}, atomic arrays \cite{endres2016atom, barredo2016atom, madjarov2019atomic}, and trapped-ion spin chains \cite{porras2004effective, debnath2018observation}.  Another class is polaritonic \cite{kittel1976introduction, band2006light} or spin-orbit coupled \cite{campbell2011realistic, lin2011spin} systems, where the dispersion relation can be tuned \textit{in situ} by external fields  \cite{Fleischhauer00, mahan2013many, Peyronel2012Jul, Firstenberg2013}.
 In principle, the density of states at the scattering threshold can be tuned to diverge faster than it does for quadratic dispersion relations.
This opens up the door to studying the implications of a more general density of states without changing the dimension of the system. 
Recently, there is a growing interest in the study of general dispersion relations in condensed matter systems, where divergent electronic density of states is referred to as a high-order Van Hove singularity \cite{isobe2019supermetal, yuan2019magic,yuan2020classification}. In particular, power-law-divergent density of states near the Fermi level leads to nontrivial metallic states termed supermetals \cite{yuan2019magic}.

 In this Letter, we explore the physics of divergent density of states from the perspective of scattering theory. We illustrate that, when a particle has a divergent density of states at a certain energy, its scattering matrix has a nontrivial universal limit that depends on the rate of the divergence. {\color{black} In the main text of this Letter}, we study single-particle scattering of photon-{\color{black}emitter} models in 1D ($D=1$) with a dispersion relation  $\epsilon(k)=\pm |d| k^m$, where $m$ is a positive integer. 
Notably, {\color{black} when $m$ is even, these emitter scattering models describe scattering for incoming frequencies near the band edge of photonic crystal waveguides coupled to atoms \cite{Hung2013} or quantum dots \cite{hughes2004enhanced}.
}
We discover that the $S$-matrix can take different universal limits $\lim_{E\rightarrow 0} S(E)$  for different values of $m$.
 The total reflection at the threshold for a quadratic dispersion relation is an example of such universal behavior corresponding to $m=2$. In general, there may be multiple classes of universal behaviors in the $S$-matrix corresponding to each $m$, depending on the properties of interactions at $k=0$. 
In  this Letter, we consider a physically natural class of interactions and characterize the universal behavior for each $m$.   
We also extend a key index theorem in scattering theory known as Levinson's theorem---which relates the  scattering phases to the number of bound states
\cite{Levinson1949, jauch1957relation, ida1959relation, wright1965, atkinson1966levinson, ma1985levinson,  barton1985levinson, Poliatzky1993, dong1998relativistic,  dong2000levinson,  ma2006levinson}---to the class of models considered in this Letter with these more general dispersion relations.
 {\color{black} To demonstrate the generality of our methodology, in the Supplemental Material \cite{supp}, we extend our discussions to separable potential scattering, general integer dimensions $D\geq 1$ and dispersion relations $\epsilon(k)=|\bm{k}|^a$, where  $\bm{k}$ is a $D-$dimensional momentum vector and $a$ is any positive real number. The extension of our single-emitter results to spin-boson models is given in an upcoming work \cite{Whitsitt2021}. These spin-boson models generalize the waveguide quantum electrodynamics (QED) models introduced below by including emitter-photon interaction terms beyond the rotating wave approximation---thereby, illustrating the relevance of our results in the many-body regime of waveguide QED. }

{\color{black}\emph{Waveguide QED.}}---In many synthetic quantum systems, particles propagating in a 1D channel are scattered by {\color{black}emitters} such as atoms, quantum dots, or superconducting qubits. 
The {\color{black}emitters} are often coupled to the environment, which adds dissipation to the system composed of the {\color{black}emitters} and the 1D channel.  {\color{black}Such models are broadly referred to as waveguide QED models.}
Since we are interested in the scattering processes with a single photon coming in and a single photon going out,  it suffices to use a non-Hermitian effective quadratic Hamiltonian  
\eq{
H&=H_0+V,\\
H_0&=\int_{-\infty}^{+\infty}dk \ \epsilon(k)C^\dagger(k)C(k)+\sum_{i,j=1}^N K^R_{ij}b_{i}^{\dagger}b_{j}, \\
V&=\int_{-\infty}^{+\infty} dk \left[\sum_{i=1}^N V_i(k)C(k)  b^{\dagger}_{i}+\text{h.c}\right],\label{EqEffHamil}  
}
where the bare Hamiltonian $H_0$ consists of the freely propagating particles, while the interacting {\color{black}emitters} are indexed by $i=1,2,\dots N$. $V$ describes the quadratic interaction between the particles and the {\color{black}emitters}. {\color{black} Through controlling  the lattice structures of the photonic crystal waveguide, the rate at which the density of states diverges at a particular energy can be fine-tuned.} Since we are discussing single-particle scattering with bounded-strength interactions, only local spectral properties of the dispersion relation matter, and our results are insensitive to the detailed behavior of the dispersion far away from the threshold energy. 
In this Letter, we focus on the dispersion relation $\epsilon(k)=\sigma|d|k^m$, where $\sigma=\pm1$, $|d|$ is a positive constant, and $m$ is a positive integer.  {\color{black} The case of $m=1$ corresponds to a linear dispersion relation and has a non-universal scattering matrix in the limit of zero energy \footnote{{\color{black}Note, for dispersions relations in 1D of the form $\epsilon(k) = |k|^a$, the $S$-matrix obtains a universal value for any positive real $a$ \cite{supp}.  These non-analytic dispersion relations have a trivial universal limit for the $S$-matrix when $a \le 1$. For $a> 1$, they have similar universal behavior of the $S$-matrix as the positive even integer $m$  cases of  $\epsilon(k) = \sigma |d| k^m$ studied in the main text.}}. For this reason, we assume $m\geq 2$ in the discussion below. } When $\sigma=\pm1$ and $m$ is even, $\epsilon(k)$ can be understood as the lowest-order approximation of a dispersion relation around its local minima/maxima, after a change of reference points for both energy and momentum.  
Depending on whether we are considering bosons scattered by bosonic {\color{black}emitters} or fermions scattered by fermionic {\color{black}emitters},  we have either commutation or anti-commutation relations: $[C(k),C^\dagger(k')]_{\pm}=\delta(k-k'), [b_i, b^\dagger_j]_{\pm}=\delta_{ij}$. 
 $K^R_{ij}$ represents the matrix element of the $N\times N$ matrix $\boldsymbol{K}^R$; $\boldsymbol{K}^R$ is 
 the only non-Hermitian term in the Hamiltonian: the Hermitian $\boldsymbol{A}$ and anti-Hermitian $i \boldsymbol{B}$ components of $\boldsymbol{K}^R=\boldsymbol{A}+i\boldsymbol{B}$ represent, respectively, the coherent and incoherent interactions among the emitters.  
 $\boldsymbol{K}^R$ is dissipative when $\boldsymbol{B}$ is non-positive and nonzero.    


For convenience, we introduce a vector function
$|v_k\rangle=[V_{1}(k),\dots, V_{N}(k) ]^T$, with corresponding  basis states given by the emitter excitations $\{ b^\dagger_1\ket{0,g}, \dots, b^\dagger_N\ket{0,g}\}$, where $\ket{0,g}$ is the ground state with zero excitation.  
In the most generic scenario, $V_i(k)$ for different {\color{black}emitters} are independent of each other.
Here, we consider the case where $\ket{v_k}$ can be written as $\ket{v_k}=V(k)\ket{u}$.  
We further assume $V(k)$ is continuous at $0$ and $V(0)\neq 0$. 
  Under this constraint, the only relevant vector around $k=0$ is $\ket{u}$, and effectively, there is only a single relevant ``degree of freedom" in the {\color{black}emitter} vector space at $k=0$. We then show that the zero-energy scattering behavior for multiple {\color{black}emitters}  can be reduced to the behavior for $N=1$. 
 As a result, we are able to obtain a complete classification of the universal low-energy scattering behavior in these models.

\emph{Universal scattering.}---{\color{black}We start with a discussion that applies to the case of general $\ket{v_k}$.} The $S$-matrix for a single particle is defined through the incoming and outgoing scattering eigenstates $|\psi_k^\pm\rangle$, where the superscript  $\pm$  specifies the boundary conditions of the scattering states.  
The $S$-matrix element from one single-particle scattering state $k$ to another $k'$ is  $\mathcal{S}(k,k')=\langle \psi^-_{k'}|\psi^+_{k}\rangle$. 
To explain the universal behavior of the $S$-matrix, it is useful to write down its  relation to the on-shell T-matrix: 
\eq{
\mathcal{S}(k,k')&=\delta(k\!-\!k')-2\pi i \delta[\epsilon(k)\!-\!\epsilon(k')] T(E\!+\!i0^+, k,k'),\label{eqST}
}
where  $0^+/0^-$ represents an infinitesimal positive/negative real number and $E=\epsilon(k)$. 
For dispersion relation $\epsilon(k)=\sigma|d|k^m$ with even $m$, there are two degenerate momenta $k_1(E),k_2(E)$ corresponding to any energy $E > 0$ ($E < 0$) for $\sigma = + 1$ ($\sigma =-1$). 
We can define a $2\times 2$ matrix $\boldsymbol{S}(E)$ by picking out the scattering amplitudes between degenerate momenta: 
\eq{
S_{\alpha\beta}(E)=\delta_{\alpha\beta}- 2\pi i\, \frac{ T[E\!+\!i0^+,k_\alpha(E), k_\beta(E)]}{\sqrt{|\epsilon'[k_\alpha(E)]\epsilon'[k_\beta(E)]|}},\label{eqSgvT}
}
where  $\alpha,\beta\in \{1,2 \}$ and {\color{black} the prefactor  $|\epsilon'[k_\alpha(E)]\epsilon'[k_\beta(E)]|^{-1/2}$ comes from $\delta[\epsilon(k)-\epsilon(k')]$ in Eq.\ \eqref{eqST}.} When $m$ is odd, we can define $\boldsymbol{S}(E)=S(E)$ as a single complex number, given by Eq.\ \eqref{eqSgvT} when  $k_\alpha(E)=k_\beta(E)=k(E)$ is the momentum corresponding to energy $E$. 
If the Hamiltonian is Hermitian, $S(E)$ is unitary.

In 1D scattering, the matrix $\boldsymbol{S}(E)$ directly describes the transmission and reflection between degenerate momenta and is often used instead of the function $\mathcal{S}(k,k')$. 
When $E\rightarrow 0$, $|\epsilon'[k_\alpha(E)]\epsilon'[k_\beta(E)]|^{-1/2}$ diverges. Since $|S_{\alpha\beta}(E)|\leq 1$, $T(E+i0, k_\alpha, k_\beta)$ in Eq.\ \eqref{eqSgvT} must approach zero to cancel  the divergence, which is the key behind the universal behavior of $\boldsymbol{S}(E)$.

To proceed further, we note that the Lippmann-Schwinger equations for this {\color{black}emitter} scattering model have a simple analytic structure.  
As a result, we can write down the single-particle  $T$-matrix $T(\omega, k,k') $ in terms of the Green's function of the {\color{black}emitters} $\boldsymbol{G}(\omega)$, which is a finite-dimensional matrix \cite{suhl1965dispersion}:
\eq{
T(\omega, k,k')&=\langle v_{k'}|\boldsymbol{G}(\omega)|v_{k}\rangle,\label{eqTG}\\
\boldsymbol{G}(\omega)&=\frac{1}{\omega\mathbb{1}_N-\boldsymbol{K}^R-\boldsymbol{K}(\omega)},\label{eqGK}\\
\boldsymbol{K}(\omega)&=\int_{-\infty}^{+\infty} dk \frac{|v_k\rangle \langle v_k|}{\omega-\epsilon(k)} \label{eqKdefi},
}
where $\mathbb{1}_N$ is an $N\times N$ identity matrix.
Equations  \eqref{eqTG}-\eqref{eqKdefi} hold for general photon-{\color{black}emitter} couplings where  $V_i(k)$ are independent functions for different {\color{black}emitters}. 
There are two mathematical conditions on $V_i(k)$ that are necessary for the integral in Eq.\ \eqref{eqKdefi} to be well-defined at any complex $\omega \neq 0$ outside the continuum spectrum \footnote{{\color{black} In {\color{black}emitter} scattering, it is natural to define the continuum spectrum to not include $0$}}.
First, we require that $V_i(k)$  is a locally square-integrable complex function on the real line. Second, to ensure that no ultraviolet divergences are present in the model, we impose a restriction on the large-$k$ behavior of $V_i(k)$:   when $k\rightarrow \pm \infty$, there exist $\gamma>1$ such that  $|V_i(k)|^2=o(|k|^{m-\gamma})$.
Each element of the $N\times N$ matrix $\boldsymbol{K}(\omega)$ is an analytic function on the complex plane with a branch cut along the continuum spectrum. $\boldsymbol{K}(\omega=E+i0^+)$ can be understood as describing effective interactions between {\color{black}emitters} induced by the 1D channel.

To understand the properties of $T(E+i0)$ close to $E=0$, we need to understand the behavior of $\boldsymbol{K}(\omega)$ around $\omega=0$. 
We can show that the value of $\boldsymbol{K}(\omega)$ around $\omega=0$ is decided by the dispersion relation and $V(0)\ket{u}$. 
Define $L(\omega)$ as the integral over the free-particle propagator:
\eq{
L(\omega)&=\int_{-\infty}^{+\infty}dk \frac{1}{\omega-\epsilon(k)}.
 \label{eqLsum0}
}
We see that, when $\omega\rightarrow 0$, $ L(\omega)^{-1}\frac{1}{\omega-\epsilon(k)}$ as a function of $k$ diverges at $k=0$ and vanishes everywhere else. In addition, $\int_{-\infty}^{+\infty} dk\ L(\omega)^{-1}\frac{1}{\omega-\epsilon(k)}=1$ by definition of $L(\omega)$. 
Hence,  it follows from a standard result in functional analysis attributed to Toeplitz \cite{PeterLax}  that 
$\lim_{\omega\rightarrow 0} L(\omega)^{-1}\frac{1}{\omega-\epsilon(k)}=\delta(k) $.
 Using the condition that $\ket{v_k}=V(k)\ket{u}$ is continuous at $k=0$ and the definition of $\boldsymbol{K}(\omega)$ in Eq.\ \eqref{eqKdefi}, we have
 \eq{
 \lim_{\omega\rightarrow 0}L^{-1}(\omega)\boldsymbol{K}(\omega)= |V(0)|^2\ket{u}\bra{u}. \label{eqLm1K}
 }
  When the {\color{black}emitter} region consists of a single site,
$\boldsymbol{K}^R = K^R$ is a complex number and  Eq.\ \eqref{eqLm1K} becomes $\lim_{\omega\rightarrow 0}L^{-1}(\omega)\boldsymbol{K}(\omega)= |V(0)|^2 $. Using Eqs.\ \eqref{eqTG} and \eqref{eqGK}, we then have  
\eq{
\lim_{\omega\rightarrow 0}L(\omega)T(\omega,k,k')
= -\frac{V^*(k')V(k)}{|V(0)|^2},\label{eqTL}
}
which is no longer dependent on $K^R$  because $\lim_{\omega\rightarrow 0}L^{-1}(\omega)K^R=0$.
Although Eq.\ \eqref{eqTL} is derived for the case of $N=1$, we show through a rigorous mathematical analysis in the Supplemental Material \cite{supp} that Eq.\ \eqref{eqTL} holds as long as the Hamiltonian does not support a ``bright'' zero-energy eigenstate,  defined as a zero-energy eigenstate that has a non-zero {\color{black}emitter} and photonic amplitude.
These bright states are distinguished from ``dark''  states that have only a nonzero photonic amplitude and rather generically arise at zero-energy in these models.  The proof of Eq. \eqref{eqTL} for $N>1$ is the main technical result of this Letter as it underlies both the universal scattering results and our proof of Levinson's theorem.

When we evaluate the $S$-matrix in the limit $E\rightarrow 0$ using Eq.\ \eqref{eqSgvT},   $k_\alpha(E),k_\beta(E)$ in the T-matrix are both  sent to $0$. Using Eq.\ \eqref{eqTL} and the condition that $V(k)$ is continuous at $k=0$, we have
\eq{
\lim_{E\rightarrow 0}L(E+i0^+)T[E+i0^+,k_\alpha(E),k_\beta(E)]
= -1,\label{eqTL1}
}
which shows that the on-shell $T$-matrix in the zero-energy limit is independent of the details of the interaction and fully determined by the dispersion relation; this is the reason behind the universal limit of the $S$-matrix when $E\rightarrow 0$.  In the Supplemental Material \cite{supp}, we  evaluate Eq.\ \eqref{eqLsum0} and obtain the $m$-dependent value of $L(\omega)$:
\eq{
L(\omega)
=-\pi i \kappa_m  \rho(|\omega|)\exp\left(-i\theta\frac{m-1}{m}\right),\label{eqLomega}
}
where the complex frequency $\omega$ is parameterized in  polar coordinates as $ \omega=\sigma\exp(i\theta)|\omega|$,  
 and $\rho(|\omega|)= \frac{2}{m|d|^{1/m}}|\omega|^{-1+1/m}$ corresponds to the density of states at energy $E=|\omega|$. 
For even $m$, $\kappa_m=\frac{2}{1-\mu^2}$ with $\mu=\exp(i\pi/m)$, while  $L(\omega)$ has a branch cut along the continuum spectrum $(0,+\infty)$ for $\sigma=+1$ or $(-\infty, 0)$ for $\sigma=- 1$. 
For odd $m$,  $\kappa_m=-\frac{1}{\mu-1}$ for $\theta\in (0,\pi)$ and $\kappa_m=-\frac{1}{\mu(\mu-1)}$ for $\theta\in (\pi,2\pi)$, while $L(\omega)$ has a branch cut along the real line. {\color{black}For both even and odd $m$, $L(\omega)$ diverges at the rate of density of states $\rho(|\omega|)$ when $\omega$ approaches $0$.

\begin{figure}
\subfloat[odd $m$]{
    \includegraphics[width=0.95\linewidth]{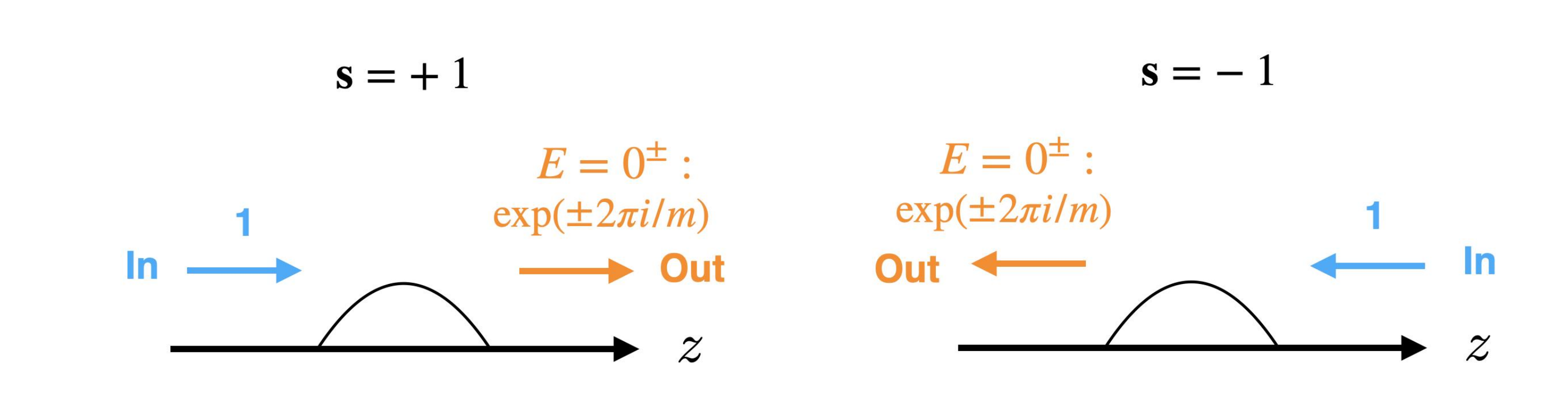}  \label{subfig:odd}
    }
\newline
 \subfloat[even $m$]{
    \includegraphics[width=0.95\linewidth]{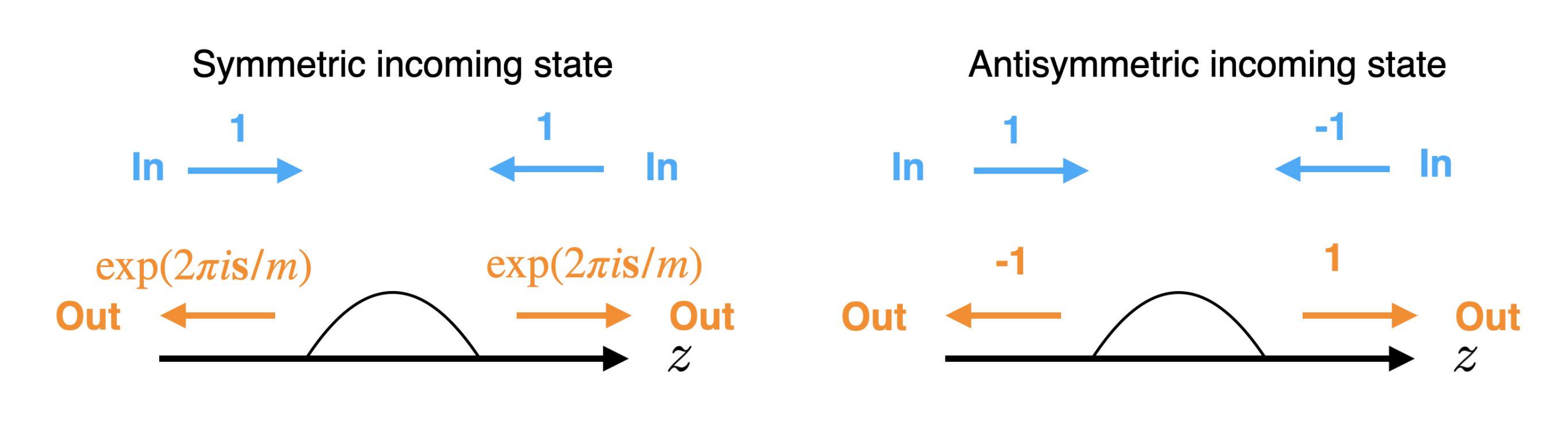}  \label{subfig:even}
    }
 \caption{
Illustration of 1D scattering ($z$ is a spatial coordinate) near zero energy for dispersion relation $\epsilon(E)=\sigma|d|k^m$ with $\sigma=\pm 1$. Panel (a) is for odd $m$, where the scattering matrix is a single transmission coefficient dependent on $m$ and the sign of energy is $E=0^\pm$.
Panel (b) is for even $m$, where the scattering matrix is a $2\times 2$ matrix. 
The eigenstates of the scattering matrix are the symmetric and antisymmetric incoming states, with eigenphases $\exp(2\pi i\sigma/m )$ and $1$, respectively.  }
\end{figure}

Now, we are ready to evaluate the  limit of the $S$-matrix at zero energy.   When $m$ is odd,  energy $E$ can approach $0$ from both above and below: $E\rightarrow 0^\pm$. When $m$ is even, $E$ can only approach $0$ from one side: $E\rightarrow 0^+$ when  $\sigma=+1$ or $E\rightarrow 0^-$ when  $\sigma=-1$. 
 Taking the limit $E\rightarrow 0^\pm$ in Eq.\ \eqref{eqSgvT} for the respective cases properly, we have
 \eq{
  \lim_{E\rightarrow 0^\pm }S_{\alpha\beta}(E) 
=\delta_{\alpha\beta}+\lim_{E\rightarrow 0^\pm } 2\pi i \rho(|E|) L^{-1}(E+i0^+), \label{EqlimSE0pm}
}
where we have used Eq.\ \eqref{eqTL1} and the observation that $\lim_{E\rightarrow 0}|\epsilon'(k_\alpha(E))\epsilon'(k_\beta(E))|^{1/2}\rho(E)=1$.}
Using Eqs.\ \eqref{eqLomega} and \eqref{EqlimSE0pm}, we find for odd $m$ 
\eq{
\lim_{E \rightarrow 0^\pm}S(E)=\exp(\pm \pi i/m),\label{eqSodd}
}  
as illustrated in Fig.\ \subref*{subfig:odd}. 
For even $m$, we find that the $S$-matrix
\eq{
\lim_{E\rightarrow \sigma0^+}\boldsymbol{S}(E) =\exp(\sigma i\pi/m)\begin{bmatrix}
\cos(\pi/m)& \sigma i\sin(\pi/m)\\
\sigma i\sin(\pi/m)&\cos(\pi/m)
  \end{bmatrix}, \label{eqSeven}
  }
is symmetric in the basis of degenerate momenta $\{|k_1=0^+\rangle, |k_2=0^-\rangle\}$. 
The symmetric eigenstate $|\psi_{\text{s}}\rangle=\frac{1}{\sqrt{2}}(1, 1)^T$ has an eigenphase $\exp(i\pi \sigma/m)$, while the antisymmetric eigenstate   $|\psi_{\text{a}}\rangle=\frac{1}{\sqrt{2}}(1, -1)^T$ has a trivial eigenphase $1$. 
The scattering of the symmetric and antisymmetric incoming states near zero energy is illustrated in Fig.\ \subref*{subfig:even}.
  For quadratic dispersion $\epsilon(k)=|d|k^2$, we recover the well-known total reflection:
  \eq{
   \lim_{E\rightarrow 0^+}\boldsymbol{S}(E)= \begin{bmatrix}
0&-1\label{EqS2}\\
-1 &0
  \end{bmatrix}.
}

{\color{black}The relation between the universal behavior of the S-matrix and the dispersion relation also applies to other types of interactions. In the Supplemental Material, we show that Eqs.\ \eqref{eqSodd} and  \eqref{eqSeven} also hold for separable potential scattering. In addition, we generalize our results to arbitrary integer dimension $D\geq 1$ and dispersion relations $\epsilon(k)=|\bm{k}|^a$, where $a > 0$ is not required to be an integer. In these cases, we demonstrate that the determinant of the S-matrix reaches a universal limit dependent only on $a/D$. }

\emph{Levinson's theorem.---}Levinson's theorem relates the quantized scattering phase to the number of bound states in the system. 
In the literature, the theorem has been discussed in various Hermitian systems and various dimensions \cite{Levinson1949, jauch1957relation, ida1959relation, 
wright1965, atkinson1966levinson, ma1985levinson,  barton1985levinson, Poliatzky1993, dong1998relativistic,  dong2000levinson,  ma2006levinson}, where the  dispersion relation close to the scattering threshold is always quadratic.  
In our recent work, we generalized Levinson's theorem to 1D {\color{black}emitter} scattering, where dissipation is present and the dispersion relation is linear at all $k$ \cite{wang2018single}.  
In that case, there is no well-defined scattering threshold. When we consider {\color{black} dispersion relations $\epsilon(k)=\sigma|d|k^m$ with the class of photon-{\color{black}emitter} couplings $\ket{v_k}=V(k)\ket{u}$, the $S$-matrix can take different universal limits at zero energy, dependent on the value of the integer $m\geq 2$ [see Eqs.\ \eqref{eqSodd} and \eqref{eqSeven}].}
This leads to a modification to Levinson's theorem, as we illustrate in the remainder of this Letter.

\begin{figure}
\subfloat[$\epsilon=\pm |d|k$]{
    \includegraphics[width=0.4\linewidth]{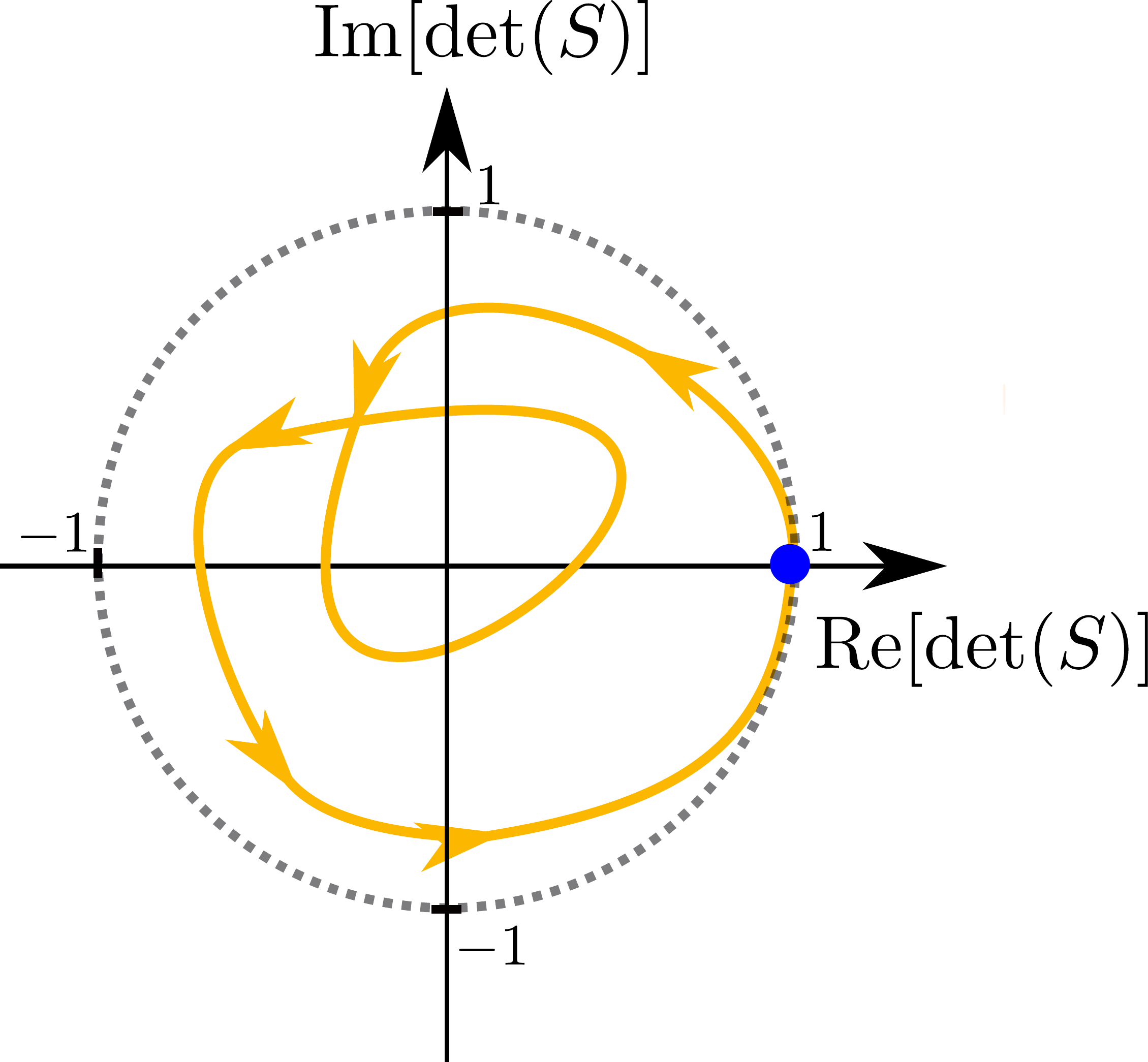}  \label{subfig:lineardetS}
}
     \subfloat[$\epsilon=|d|k^2$]{
    \includegraphics[width=0.4\linewidth]{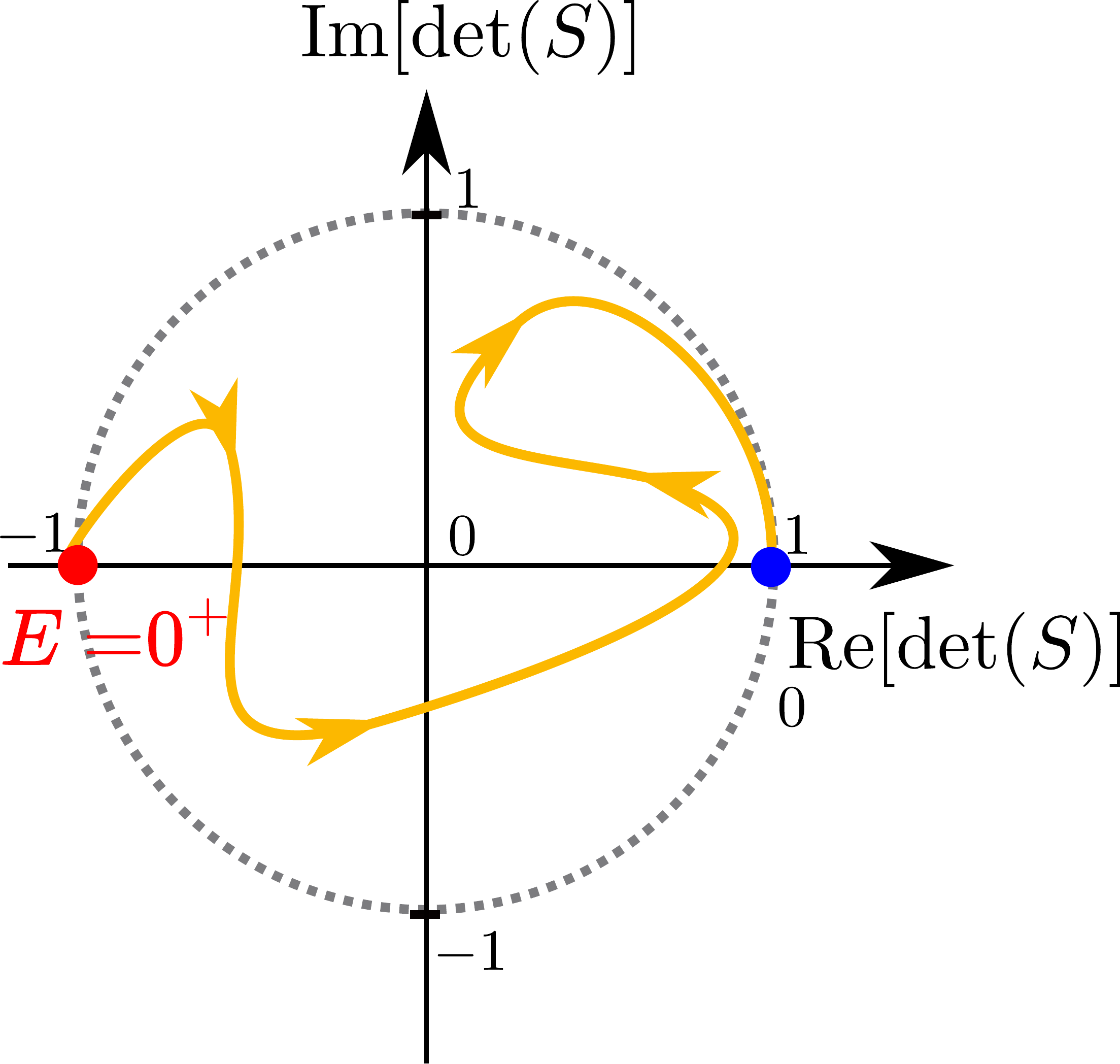}  \label{subfig:quadraticdetS}
}

    \subfloat[$\epsilon=|d|k^6$]{
    \includegraphics[width=0.4\linewidth]{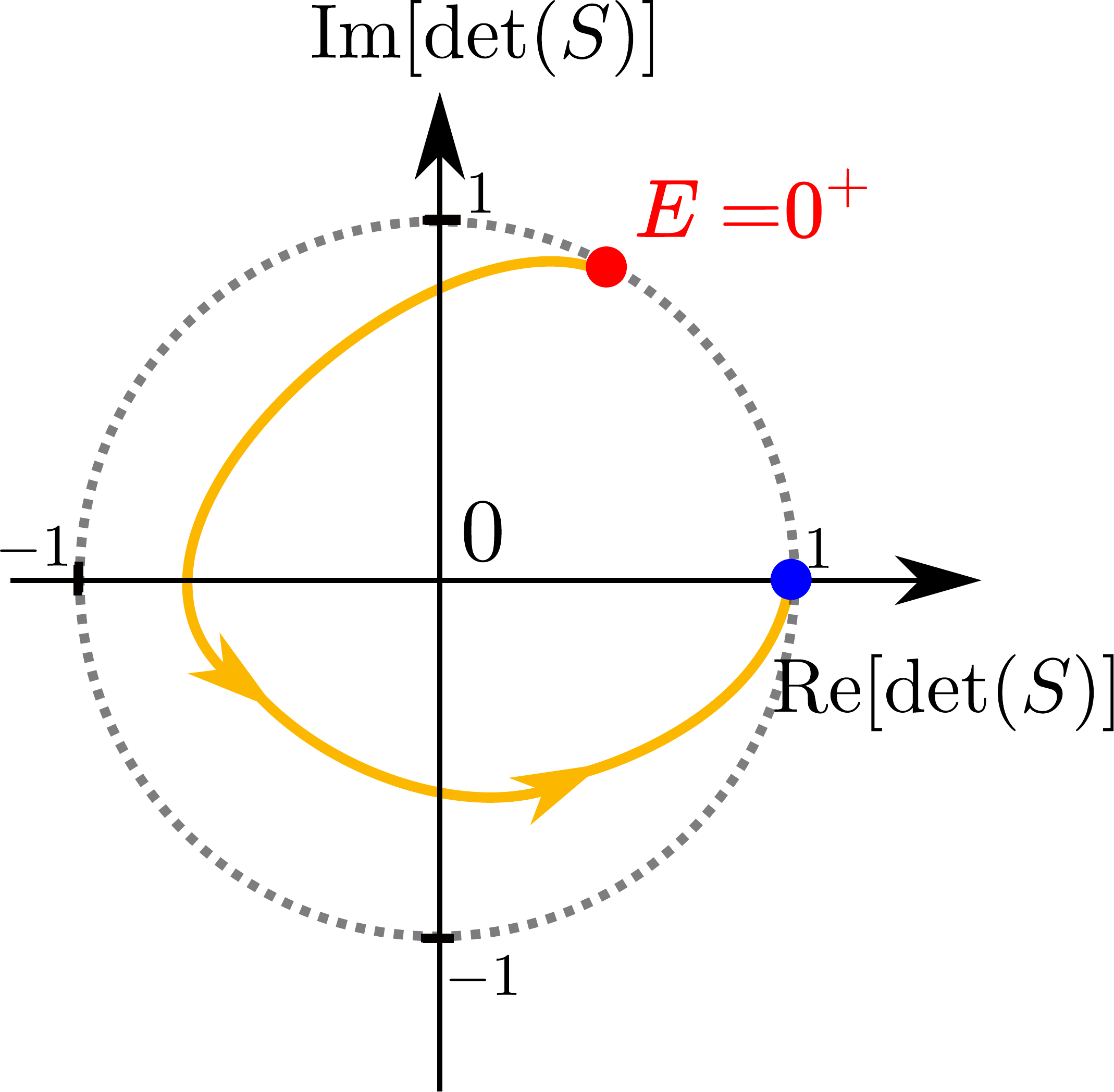}   \label{subfig:evendetS}}
     \subfloat[$\epsilon=\pm |d|k^5$]{
    \includegraphics[width=0.4\linewidth]{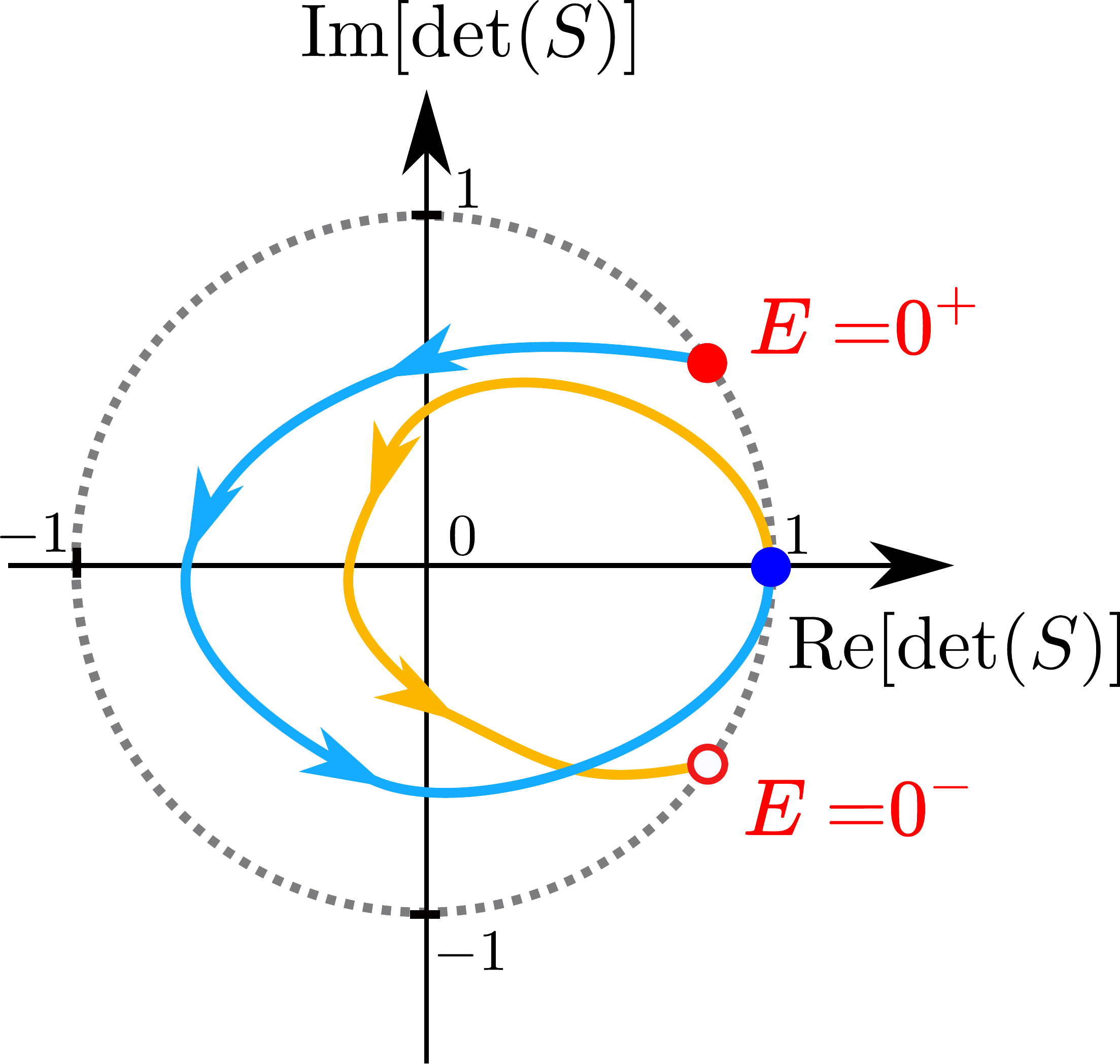}     \label{subfig:odddetS}}
        \caption{ Illustrations of the trajectories of $\det[\boldsymbol{S}(E)]$ of a dissipative system in the complex plane when $E$ is increased from $E_{min}$ to $E_{max}$ for (a) $\epsilon(k)=\pm|d| k$, where the trajectory starts and ends at $1$. (b) $\epsilon=|d|k^2$, where the trajectory starts at $-1$ and ends at $1$.
        (c) $\epsilon=|d| k^6$, where the trajectory starts at $\exp(i \pi/3)$ and ends at $1$.   
        (d) $\epsilon=\pm |d|k^5$, where the trajectory for $E\in (-\infty, 0)$ (solid yellow) starts at $1$ and ends at $S(0^-)=\exp(i\pi/5)$ , while the trajectory for $E\in (0, +\infty)$ (solid black) starts at $S(0^+)=\exp(i\pi/5)$ and ends at $1$.
    }
      \label{fig:jumps}
\end{figure} 
  For simplicity, we assume that there are no bright zero-energy eigenstates and no bound states in the continuum in the system.
Before discussing general $m$, we summarize the theorem for quadratic ($m=2$) and linear ($m=1$) dispersion relations. 
When energy $E$ is increased from the lower end of the continuum spectrum $E_{min}$ (which can be $-\infty$) to the upper end $E_{max}$ (which can be $+\infty$), $\det[\boldsymbol{S}(E)]$ traces a trajectory in the complex plane. 
 In the case of  
 $\epsilon(k)=k$,  the $S$-matrix is an identity matrix at both ends of the continuum spectrum. 
 The trajectory of $\det[\boldsymbol{S}(E)]$ in these cases forms a closed loop starting and ending at $1$, as illustrated in Fig.\ \subref*{subfig:lineardetS}. 
 For illustration purposes, we assume that the system is dissipative, so the trajectory is not confined to the unit circle. 
 Levinson's theorem states that the winding number of this loop around the origin is equal to the decrease in the number of bound states $\Delta N_B $ after the interaction is turned on \cite{ida1959relation, atkinson1966levinson}.
 For {\color{black}emitter} scattering,  the number of bound states for the bare Hamiltonian $H_0$ is equal to the number of {\color{black}emitters} $N$; hence, $\Delta N_B=N-N_B$, where  $N_B$ is the number of bound states for the full Hamiltonian \cite{wang2018single}.
 If we define the scattering phase $\delta(E)$ of $\det[\boldsymbol{S}(E)]\equiv |\det[\boldsymbol{S}(E)]|\exp(2i\delta(E))$ as a continuous function of $E$ \footnote{ For dissipative systems, we assume that $\det[\boldsymbol{S}(E)]\neq 0$ for any $E$ within the continuum spectrum}, the theorem can be stated as  $ \Delta \delta\equiv \delta(E_{max})-\delta(E_{min})=\pi\Delta N_B$. 
  For a quadratic dispersion relation $\epsilon(k)=k^2$, the trajectory of $\det[\boldsymbol{S}(E)]$ starts at  $\lim_{E\rightarrow 0}\det[\boldsymbol{S}(E)]=-1$ and ends at $\lim_{E\rightarrow +\infty}\det[\boldsymbol{S}(E)]=1$, as illustrated in Fig.\ \subref*{subfig:quadraticdetS}. 
  As compared to the closed-loop case of Fig.\ \subref*{subfig:lineardetS}, Levinson's theorem is modified to $ \Delta \delta=\pi \Delta N_B+\pi/2$ \footnote{For potential scattering, see Levinson's theorem for quadratic dispersion relation in 1D in Ref.\  \cite{dong2000levinson}. For emitter scattering, Levinson's theorem for quadratic dispersion relation is, to our knowledge, first presented in this Letter}.

Next, we give our  results  on Levinson's theorem for {\color{black}emitter} scattering with dispersion relation $\epsilon(k)=\sigma|d|k^m$ with $\sigma=\pm 1$ and photon-{\color{black}emitter}  couplings $\ket{v_k}=V(k)\ket{u}$.  First, consider the case of even $m$. 
When $\sigma=+1$, the trajectory of $\det[\boldsymbol{S}(E)]$ starts at $\lim_{E\rightarrow 0^+}\det[\boldsymbol{S}(E)]= \exp(2\pi i/m)$ [see Eq.\ \eqref{eqSeven}] and ends at $\lim_{E\rightarrow +\infty}\det[\boldsymbol{S}(E)]= 1$, as illustrated in Fig.\ \subref*{subfig:evendetS} for $m=6$. 
When $\sigma=-1$, the trajectory of $\det[\boldsymbol{S}(E)]$ starts at $\lim_{E\rightarrow -\infty}\det[\boldsymbol{S}(E)]= 1$ and ends at  $\lim_{E\rightarrow 0^-}\det[\boldsymbol{S}(E)]= \exp(-2\pi i/m)$.
In the Supplemental Material \cite{supp}, we prove that, for both cases,  
  \eq{
 \Delta \delta=\pi  (N-N_B)+\pi \frac{m-1}{m}\label{eqLev}.
  }
  When $m$ is odd, the continuum spectrum is $(-\infty,0)\cup (0,+\infty)$, and the trajectory of $S(E)$ is discontinuous across $0$, as illustrated in Fig.\ \subref*{subfig:odddetS}. 
  When $E$ increases from $-\infty$ to $0$, the trajectory starts from $1$ and ends at $\exp(-i\pi/m)$  {\color{black}[see Eq.\ \eqref{eqSodd}]}.  
  When $E$ increases from $0$ to $+\infty$, the trajectory starts at $\exp(+i\pi/m)$ and ends at $1$.  
  If we define $\Delta \delta$ as the sum of the winding phases of the two continuous trajectories, $\Delta \delta$ satisfies Eq.\ \eqref{eqLev}, as we show in the Supplemental Material \cite{supp}.

 \emph{Outlook.}--- In this Letter, we have illustrated how a divergent density of states results in a wide variety of universal scattering behaviors.  
{\color{black} An immediate next step is to generalize our results to arbitrary photon-emitter interactions and  non-separable short-range potentials.} 
Although our results rigorously apply only in the zero-energy limit, our work establishes the foundation for the development of a universal low-energy theory for {\color{black}general} dispersion relations. Similar to the case of quadratic dispersion relations, we expect the scattering to be primarily determined by the scattering length when the de Broglie wavelengths of the particles are large compared to the range of the interaction.  It will be interesting to explore how other well-studied problems for massive particles---such as Efimov physics  \cite{efimov1973energy, braaten2006universality, braaten2001three}, renormalization for the effective field theory \cite{adhikari1995perturbative, bedaque1999renormalization}, and the $N$-body scale  \cite{bazak2019four}---are modified in the presence of these more general dispersion relations. 
  
{ \color{black}  Our work also motivates new directions in many-body physics. The fact that bosons with quadratic dispersion relations form a Tonks–Girardeau gas at low-temperature in 1D and a Bose-Einstein condensate in 3D is closely related to the different behaviors of two-body scattering at the scattering threshold (total reflection vs. no interaction).  Our discovery of new nontrivial universal behaviors of the S-matrix may lead to predictions of new phases of dilute gases for systems with a divergent density of states.  Furthermore, it remains an outstanding challenge to describe emitter scattering when both dissipation and coherent driving are present.  }

\begin{acknowledgements}
We thank Sarang Gopalakrishnan, Zhen Bi, Darrick Chang, Abhinav Deshpande, {\color{black}Chris Baldwin}, and Simon Lieu for discussions.  
Y.W.\ and A.V.G.\ acknowledge support by  ARO MURI, AFOSR, AFOSR MURI, U.S.~Department of Energy Award No.~DE-SC0019449, DoE ASCR Quantum Testbed Pathfinder program (award No.~DE-SC0019040),  DoE ASCR Accelerated Research in Quantum Computing program (award No. DE-SC0020312), and NSF PFCQC program.
\end{acknowledgements}

\bibliographystyle{apsrev4-1}
\bibliography{library}

\widetext
\clearpage
\begin{center}
\textbf{\large Supplemental Material}
\end{center}
\setcounter{equation}{0}
\setcounter{figure}{0}
\setcounter{table}{0}
\setcounter{page}{1}
\makeatletter
\renewcommand{\theequation}{S\arabic{equation}}
\renewcommand{\thefigure}{S\arabic{figure}}


\section{Overview}
 In this Supplemental Material, we present details omitted in the main text and generalize the results to higher dimensions, certain non-analytic dispersion relations and delta-function potential scattering.  In Sec.\ \ref{sec_ref2}, we derive the expression for $L(\omega)$ in Eq.\ (13) in the main text.  In Sec.\ \ref{secproofPs}, we prove Eq.\ (11) in the main text for the case of $N\geq 2$ and define bright zero-energy eigenstates. In Sec.\ \ref{SecLevinson}, we  prove Levinson's theorem [Eqs.\ (18) in the main text].  {\color{black} In Sec.\ \ref{SSecGen}, we generalize our results to higher dimensions and  dispersion relations $\epsilon(k)=|\bm{k}|^a$ with non-integer values of $a>0$. In Sec.\ \ref{SSecDelta}, we generalize our results to separable potential scattering.}

\renewcommand{\theequation}{S\arabic{equation}}
\renewcommand{\thefigure}{S\arabic{figure}}
\setcounter{equation}{0}
\setcounter{figure}{0}
\setcounter{secnumdepth}{1}

\section{Calculation of $L(\omega)$\label{sec_ref2} }

 In this section, we derive the expression for $L(\omega)$ in Eq.\ (13) in the main text. 
 We start with the definition of $L(\omega)$ in Eq.\ (9)  in the main text:
 \eq{
L(\omega)=\int_{-\infty}^{+\infty} dk \frac{1}{\omega-\epsilon(k)}. \label{eqSLdefi}
}
 The dispersion relation is given by $\epsilon(k)=\sigma |d|k^m$, where  $\sigma=\pm 1$ and $m\geq 2$ is a positive integer. To compute the integral, we close  the integration contour in the upper half \footnote{Closing the contour in the lower half of the complex plane would give the same answer } of the complex plane and apply the residue theorem:
 \eq{
 L(\omega)=-2\pi i \sum_{\text{Im}[y_j]>0}\epsilon'(y_j)^{-1}, \label{eqLsum}
 }
 where the complex numbers $y_i$ satisfy $\epsilon(y_j)=\sigma|d|y_i^m=\omega$ and $\text{Im}[y_j]>0$. 
Given the parametrization of $\omega$ in polar coordinates as $\omega=\sigma\exp(i\theta)|d|p^m $,
we have $y_j=\exp(i\theta/m)p \mu^{2j}$, where $\mu=\exp(i\pi/m)$ and $j\in \{0,1,\dots, m-1\}$. Define $A$ as the set of $j$ for which $y_j$ is above the real line.  Equation \eqref{eqLsum} can then be expressed as 
 \eq{
L(\omega)&=-2\pi i  \sum_{j\in A}\frac{1}{m|d| y_j^{m-1}},\label{eqKsum}\\
&=-\pi i   \frac{2}{m|d| p^{m-1} }\exp\left(-i\theta\frac{m-1}{m}\right)\sum_{j\in A} (-\mu)^{2j}, \label{eqKsum2}
}
where the set $A$ and the value of $\kappa_m\equiv \sum_{j\in A} (-\mu)^{2j} $ are given in Table \ref{table_value} for both odd and even $m$. 
Note that the prefactor $\frac{2}{m|d|p^{m-1}}$  in Eq.\ \eqref{eqKsum2} is equal to the density of states  $\rho(|\omega|)= \frac{2}{m|d|^{1/m}}|\omega|^{-1+1/m}$.   Hence, we have proved that $L(\omega)$ is given by Eq.\ (13) in the main text. 

\begin{table}[hb]
\caption{The set $A$ and the value of $\kappa_m = \sum_{j\in A} (-\mu)^{2j}$ for both odd and even $m$.}
\label{table_value}
\begin{tabular}{cl|c|c|}
\cline{3-4}
                                                &                     & $A$ & $\kappa_m\equiv \sum_{j\in A} (-\mu)^{2j}$ \\ \hline
\multicolumn{1}{|c|}{\multirow{2}{*}{Odd $m$}}  & $\theta\in (0,\pi)$ & $ (0, 1, 2,\dots \frac{m-1}{2}) $  & $-\frac{1}{\mu-1}$                \\ \cline{2-4} 
\multicolumn{1}{|c|}{}                          & $\theta\in (\pi, 2\pi)$              & $(0, 1, 2,\dots \frac{m-3}{2})$ & $-\frac{1}{(\mu-1)\mu}$                \\ \hline
\multicolumn{1}{|c|}{Even $m$}                          & $\theta\in (0,2\pi)$                  & $(0, 1, 2,\dots \frac{m-2}{2})$ & $\frac{2}{1-\mu^2}$            \\ \hline
\end{tabular}
\end{table}

\section{Emitter scattering
\label{secproofPs}}


  In this section, we prove that if there exists no bright zero-energy eigenstate, Eq.\ (11) in the main text holds for the class of models where $\ket{v_k}=V(k)\ket{u}$, even when $N\geq 2$. 
  Before diving into the proof, we give the definition of bright zero-energy eigenstates and give a physical explanation as to why our universality results require their absence. 
  
  
  Due to the multi-component nature of our emitter scattering problems, we find it necessary to  categorize all eigenstates of the Hamiltonian into bright, dark, and emitter eigenstates.  Bright eigenstates have a nonzero photon and emitter wavefunction, while dark eigenstates have only a nonzero photonic amplitude, and emitter eigenstates have only a nonzero emitter  amplitude. 
  With this terminology established, we now give an overview of the properties of the different types of eigenstates at zero energy. The zero-energy emitter states correspond to the null vectors of $\bm{K}^R$ that are orthogonal to $\ket{v_k}=V(k)\ket{u}$. They are decoupled from the photon channel, hence their existence has no impact on the universal behavior of the $S$-matrix. 
 For $V(k)$ with nonzero derivatives at $k=0$, there generally exist uncountably many zero-energy dark states independent of $\bm{K}^R$, which are polynomial functions with degree less than $m-1$. Bright states at zero-energy are fine-tuned and have a constant photon wavefunction in space.  As we show below, these states come into existence precisely when the universal scattering behavior fails.

  To give a heuristic explanation for why universal scattering at zero energy fails at these fine-tuned parameters, we consider the classic model of 1D potential scattering with quadratic dispersion relation ($m=2$), i.e., a 1D quantum mechanical problem described by the Schr{\" o}dinger equation
\begin{equation}
-\frac{d^2 \psi(z)}{d z^2} + V(z) \psi(z) = E \psi(z),
\end{equation}
where we set the mass equal to $1/2$.
A particle being scattered off a generic, short-range potential $V(z)$ would experience a total reflection in the limit $E\rightarrow 0$, similarly to what happens in our 1D emitter scattering models. Another feature of these 1D potential scattering problems is that there exists a fine-tuned, critical regime  when the scattering in the limit $E\rightarrow 0$ becomes total transmission instead of total reflection.  This occurs when there is a zero-energy eigenstate and there is no energy scale to compare with when the limit $E\rightarrow 0$ is taken. The zero-energy eigenstate can be understood as the effective ``transition state" when a new bound state emerges or disappears upon the continuous tuning of parameters. 

Similarly, in our emitter scattering models, the universal scattering behavior that takes place for generic parameters would fail at certain fine-tuned parameters.  An important difference to note is that, unlike in potential scattering, not all zero-energy eigenstates in emitter scattering are associated with the critical regime where the universal scattering behavior fails. For the particular type of interactions $\ket{v_k}$ being considered in this Letter,  we discover that the critical regime can be associated with the existence of a particular type of eigenstates at zero energy, which we call bright zero-energy states (defined above). 

  In order to state our goal more explicitly, we rewrite the Hamiltonian given by Eqs. (1)-(3) in the main text in the single-excitation manifold: 
  \eq{
H^{(1)}&=
\int_{-\infty}^{+\infty}dk\ \epsilon(k)C^\dagger(k)C(k)+\int_{-\infty}^{+\infty} dk \left[C^\dagger(k)  V^*(k)\bra{u}+C(k) V(k) \ket{u}\right]+\bm{K}^R,\label{EqEffHamil1}
}
 where we have used the matrix representation $\bm{K}^R$ to replace $\sum_{i,j=1}^N K^R_{ij}b_{i}^{\dagger}b_{j}$ and the vector $\ket{v_k}=V(k)\ket{u}$ to replace the emitter creation operators $\sum_{i=1}^N V_i(k)b^\dagger_i$.

  Our goal in this section is to prove the following theorem:

  \begin{theorem} \label{theorem1} Suppose $V(k)$ is a locally square-integrable function continuous at $k=0$ and $V(k=0)\neq 0$. When $k\rightarrow \pm\infty$, $|V(k)|^2=o(k^{m-\gamma})$ for some $\gamma>1$. Consider the class of emitter interactions $\ket{v_k}=V(k)\ket{u}$, where $\ket{u}$ is a unit vector. The single-particle $T$-matrix given by  Eqs.\ (6)- (8) in the main text reads:
\eq{
T(\omega, k,k')&=V^*(k')V(k)\langle u|\frac{1}{\omega\mathbb{1}_N-\boldsymbol{K}^R-\boldsymbol{K}(\omega)}|u\rangle,\label{eqSTG}\\
\boldsymbol{K}(\omega)&=|u\rangle \langle u| K(\omega),\quad K(\omega)\equiv\int_{-\infty}^{+\infty} dk \frac{|V(k)|^2}{\omega-\epsilon(k)}. \label{eqSKimp} 
}
When $H^{(1)}$ in Eq.\ \eqref{EqEffHamil1} has no bright zero-energy  eigenstates, Eq.\ (12) in the main text holds, namely,
   \eq{
\lim_{\omega\rightarrow 0}L(\omega)T(\omega,k,k')
= -\frac{V^*(k')V(k)}{|V(0)|^2}.\label{eqSLTeta0S}
}
Note that $T(\omega,k,k')$ and $\boldsymbol{K}(\omega)$ are defined for $\omega$  outside the continuum spectrum, hence the limit $\omega\rightarrow 0$ is taken in any direction except from within the continuum spectrum. 
  \end{theorem}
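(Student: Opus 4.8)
The plan is to exploit the rank-one structure of the induced interaction $\boldsymbol{K}(\omega) = |u\rangle\langle u| K(\omega)$, which is exactly what the assumption $|v_k\rangle = V(k)|u\rangle$ buys us and what collapses the $N$-emitter problem to a scalar one. Writing $M(\omega) = \omega\mathbb{1}_N - \boldsymbol{K}^R$ and $g(\omega) = \langle u| M(\omega)^{-1}|u\rangle$ (well-defined for $\omega$ off the finite spectrum of $\boldsymbol{K}^R$, which we avoid as $\omega\to0$ outside the continuum), I would apply the Sherman--Morrison identity to the rank-one update $M(\omega) - |u\rangle\langle u|K(\omega)$ to obtain
\[
\langle u|\boldsymbol{G}(\omega)|u\rangle = \frac{g(\omega)}{1 - K(\omega)g(\omega)},
\]
so that $L(\omega)T(\omega,k,k') = V^*(k')V(k)\,L(\omega)g(\omega)/[1-K(\omega)g(\omega)]$. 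The whole problem then reduces to the two scalar functions $g(\omega)$ and $K(\omega)$.

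Second, I would establish the asymptotics of $K$ and $L$ near $\omega=0$. Reusing the Toeplitz/delta-sequence argument already given in the main text, now applied to the weight $|V(k)|^2$, one finds $\lim_{\omega\to0}L^{-1}(\omega)K(\omega) = |V(0)|^2$; since $L(\omega)$ diverges at the density-of-states rate $\rho(|\omega|)$ while carrying the same leading direction-dependent phase $\kappa_m e^{-i\theta(m-1)/m}$ as $K(\omega)$, the ratio $L(\omega)/K(\omega)\to |V(0)|^{-2}$ independently of the direction of approach. Rewriting the target as
\[
L(\omega)\,\frac{g(\omega)}{1-K(\omega)g(\omega)} = \frac{L(\omega)}{K(\omega)}\cdot\frac{K(\omega)g(\omega)}{1-K(\omega)g(\omega)},
\]
it suffices to prove $K(\omega)g(\omega)\to\infty$: the second factor then tends to $-1$ and the whole expression to $-|V(0)|^{-2}$, giving the claim.

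Third, and this is the crux, I would control $g(\omega)$ as $\omega\to0$ and tie its degenerate behaviour to bright zero-energy eigenstates. When $\boldsymbol{K}^R$ is invertible, $g(\omega)\to g(0) = -\langle u|(\boldsymbol{K}^R)^{-1}|u\rangle$; since $|K(\omega)|\to\infty$, we obtain $K g\to\infty$ whenever $g(0)\neq0$, so the only way the argument can fail is $g(0)=0$. I would then solve the $E=0$ eigenvalue equations of $H^{(1)}$ directly: the photon equation $\epsilon(k)\phi(k) = -V^*(k)\langle u|c\rangle$ forces $\langle u|c\rangle=0$ and admits a constant-in-space ($\phi\propto\delta(k)$) homogeneous piece, while the emitter equation yields $\boldsymbol{K}^R|c\rangle = -\beta|u\rangle$ for a scalar $\beta$ set by that piece. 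A genuinely bright solution (nonzero $|c\rangle$ and nonzero photon amplitude) then exists exactly when the solvability constraint $\langle u|c\rangle = \beta\,g(0) = 0$ holds with $\beta\neq0$, i.e.\ precisely when $g(0)=0$. Hence the hypothesis ``no bright zero-energy eigenstate'' is equivalent to $g(0)\neq0$, which delivers $Kg\to\infty$ and the result.

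Finally, I would treat the singular case. Because $\boldsymbol{K}^R$ is non-Hermitian one must allow non-diagonalizable (Jordan) structure, so I would decompose $\mathbb{C}^N$ into the generalized eigenspaces of $\boldsymbol{K}^R$ and isolate its kernel: if $|u\rangle$ overlaps the zero-eigenvalue block through the left generalized eigenvectors, then $g(\omega)$ has a pole at $\omega=0$, so $Kg\to\infty$ automatically and no bright state obstructs us; if it does not, those null directions are decoupled ``emitter'' zero modes that can be projected out, reducing the analysis to the invertible case on the $\boldsymbol{K}^R$-cyclic subspace generated by $|u\rangle$. I expect this last step---correctly handling the non-Hermitian Jordan structure and cleanly separating bright states from the ubiquitous dark states and the decoupled emitter modes---to be the main technical obstacle, together with making the equivalence ``$g(0)=0 \Leftrightarrow$ bright zero-energy eigenstate'' rigorous given the distributional ($\delta$-function) nature of the threshold photon wavefunction.
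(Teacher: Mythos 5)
Your proposal is correct and follows essentially the same route as the paper's proof: your Sherman--Morrison reduction $\langle u|\boldsymbol{G}(\omega)|u\rangle = g(\omega)/[1-K(\omega)g(\omega)]$ is algebraically identical to the paper's cofactor identity $\bra{u}\bm{H}(\omega)^{-1}\ket{u}=\det(\bm{H}_{\cancel{11}}(\omega))/\det(\bm{H}(\omega))$ with $g(\omega)=\det(\omega\mathbb{1}_{N-1}-\boldsymbol{K}^R_{\cancel{11}})/\det(\omega\mathbb{1}_N-\boldsymbol{K}^R)$, the Toeplitz delta-sequence argument giving $K(\omega)/L(\omega)\to|V(0)|^2$ is reused verbatim, and your constant-photon-wavefunction ansatz at $g(0)=0$ is exactly the bright state constructed in the paper's Lemma \ref{lemmaBounds}. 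The only point of divergence is the degenerate case: the paper avoids the Jordan-block analysis you flag as the main obstacle by stating the non-degeneracy condition geometrically (``every null vector of $\boldsymbol{K}^R_{\cancel{11}}$ is a null vector of $\boldsymbol{K}^R$'') and arguing directly about the multiplicity of $\omega=0$ in the determinant ratio, which is the cleaner way to close that step.
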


 \begin{proof}

Our proof consists of two lemmas linked by a condition on $\bm{K}^R$. The idea of the proof is that the absence of bright zero-energy  eigenstates can be translated into a condition on $\bm{K}^R$, which turns out to be necessary for the proof of Eq.\ \eqref{eqSLTeta0S}.

Choose an orthonormal basis $\{ |u_1\rangle,|u_2\rangle,\dots |u_{N}\rangle  \}$ for the single-emitter Hilbert space, where $\ket{u_1}\equiv \ket{u}$ is the first vector in this new basis. The link between the two lemmas  is the submatrix $\bm{K}^R_{\cancel{11}}$ constructed from deleting the first row and first column of $\bm{K}^R$; $\bm{K}^R_{\cancel{11}}$  can be considered as an operator on the emitter-excitation subspace $\{ |u_2\rangle,\dots |u_{N}\rangle  \}$ orthogonal to $\ket{u}$.
In Lemma \ref{Lemma9maintext}, we prove that Eq.\ \eqref{eqSLTeta0S} holds if any null vector of  $\bm{K}^R_{\cancel{11}}$ also corresponds to the null vector of $\bm{K}^R$. In Lemma \ref{lemmaBounds}, we prove that the condition  Lemma \ref{Lemma9maintext} relies on is guaranteed by the absence of bright  zero-energy eigenstates.  Combining the two lemmas  completes the proof of Theorem \ref{theorem1}. 

\begin{lemma}\label{Lemma9maintext}
If any null vector of  $\bm{K}^R_{\cancel{11}}$ also corresponds to the null vector of $\bm{K}^R$,
Eq.\ \eqref{eqSLTeta0S} follows.


\end{lemma}
\begin{proof}
Using Eq.\ \eqref{eqSTG}, the l.h.s of Eq.\ \eqref{eqSLTeta0S} can be written as
\eq{
\lim_{\omega\rightarrow 0} L(\omega)T(\omega,k, k')
& = V^*(k')V(k) \lim_{\omega\rightarrow 0} L(\omega) \bra{u}\bm{H}(\omega)^{-1}\ket{u},  \label{eqSLTstart}
}
where $\bm{H}(\omega)  \equiv  \omega\mathbb{1}_N-\boldsymbol{K}^R-\bm{K}(\omega)$.
Hence, our goal, Eq.\ \eqref{eqSLTeta0S}, is equivalent to 
\eq{
 \lim_{\omega\rightarrow 0} L(\omega) \bra{u}\bm{H}(\omega)^{-1}\ket{u}=-\frac{1}{|V(0)|^2}. \label{eqSLuHugoal}
}

In the new basis  where $|u_1\rangle= \ket{u}$ is the first basis vector, $\bra{u}\bm{H}(\omega)^{-1}\ket{u}$ is the $(1,1)$ matrix element of the inverse of $\bm{H}(\omega)$, and can be computed from the $(N-1)\times (N-1)$ submatrix $\bm{H}_{\cancel{11}}(\omega)$ constructed from deleting the first row and first column of $\bm{H}(\omega)$: 
\eq{
\bra{u}\bm{H}(\omega)^{-1}\ket{u}=\frac{\det(\bm{H}_{\cancel{11}}(\omega))}{\det(\bm{H}(\omega))}. \label{eqSsandH}
}
Using $\boldsymbol{K}(\omega)=|u\rangle \langle u| K(\omega)$, we have
\eq{
\det(\bm{H}_{\cancel{11}}(\omega))&=\det(\omega\mathbb{1}_{N-1}-\boldsymbol{K}^R_{\cancel{11}}),\label{eqSdetH11}\\
\det(\bm{H}(\omega))&
=-K(\omega)\det(\omega\mathbb{1}_{N-1}-\boldsymbol{K}^R_{\cancel{11}})+\det(\omega\mathbb{1}_N-\boldsymbol{K}^R).\label{eqSdetH}
}
Combining Eqs.\ \eqref{eqSsandH}, \eqref{eqSdetH11} and \eqref{eqSdetH}, the l.h.s of Eq.\ \eqref{eqSLuHugoal} becomes 
\eq{
\lim_{\omega\rightarrow 0} L(\omega) \bra{u}\bm{H}(\omega)^{-1}\ket{u}
&=\lim_{\omega\rightarrow 0} L(\omega)\left(-K(\omega)+\frac{\det(\omega\mathbb{1}_N-\boldsymbol{K}^R)}{\det(\omega\mathbb{1}_{N-1}-\boldsymbol{K}^R_{\cancel{11}})}\right)^{-1}. \label{eqSLimLuHu}
}
Let us label the $N$ roots of the characteristic polynomial of    $\bm{K}^R$ by $E_i$ for $i=1,\dots N$, and the $N-1$ roots of the characteristic polynomial of  $\bm{K}_{\cancel{11}}^R$ by $\bar{E}_i$ for $i=1,\dots N-1$. $E_i$ and $\bar{E}_i$ correspond to the eigenvalues of $\bm{K}^R$ and $\bm{K}_{\cancel{11}}^R$, respectively, where any eigenvalue with multiplicity $n\geq 2$ is assigned to $n$ different indices. We have
\eq{
\lim_{\omega\rightarrow 0}\frac{\det(\omega\mathbb{1}_N-\boldsymbol{K}^R)}{\det(\omega\mathbb{1}_{N-1}-\boldsymbol{K}^R_{\cancel{11}})}=\lim_{\omega\rightarrow 0}\frac{\prod_{i=1}^N (\omega-E_i)}{\prod_{i=1}^{N-1} (\omega-\bar{E}_i)}. \label{eqSdetratioK}
}
Since any null vector of $\bm{K}^R_{\cancel{11}}$  corresponds to a null vector of $\bm{K}^R$ by the assumption of the Lemma, if $\bm{K}^R$ has null vectors, its zero-eigenvalue multiplicity must be greater or equal to that of $\bm{K}_{\cancel{11}}$. Hence, the limit in Eq.\ \eqref{eqSdetratioK} is finite.

In the main text, we have introduced the identity $\lim_{\omega\rightarrow 0} L^{-1}(\omega)\frac{1}{\omega-\epsilon(k)}=\delta(k)$; hence $\lim_{\omega\rightarrow 0} L^{-1}(\omega)K(\omega)=|V(0)|^2\neq 0$ and Eq.\ \eqref{eqSLimLuHu} leads to Eq.\ \eqref{eqSLuHugoal}. The proof of Lemma  \ref{Lemma9maintext} is complete.

\end{proof}
If we can prove that the absence of bright  zero-energy  eigenstates of Eq.\ \eqref{EqEffHamil1} guarantees that any null vector of  $\bm{K}^R_{\cancel{11}}$ also corresponds to the null vector of $\bm{K}^R$, Eq.\ \eqref{eqSLTeta0S} would immediately follow from Lemma \ref{Lemma9maintext}. To do this, we prove the contrapositive statement in the following lemma:
\begin{lemma}
\label{lemmaBounds}
When there exists a vector $\ket{e_0}=\sum_{i=2}^N e_i\ket{u_i}$  orthogonal to $\ket{u}$, such that $\bm{K}^R\ket{e_0}\neq 0$ and  $\bm{K}_{\cancel{11}}^R\ket{e_0}=0$, then there exists a bright zero-energy eigenstate of the Hamiltonian in Eq.\ \eqref{EqEffHamil1}. 
\end{lemma}
\begin{proof}
We plan to write down an ansatz with a nonzero photon and emitter wavefunction and verify that it is a zero-energy eigenstate of the Hamiltonian in Eq.\ \eqref{EqEffHamil1}. 
 The ansatz we propose is the following: 
\eq{
\ket{\psi_0}&=\int_{-\infty}^{+\infty} dz\  \psi_0(z)C^\dagger(z)\ket{0,g}
+\ket{e_0},\label{eqSPsi0ansatz}\\ 
\psi_0(z)&=-V(0)^{-1}\braket{u|\bm{K}^R|e_0},\label{eqSPsi0ansatzPh}
}
where the photon wavefunction $\psi_0(z)$ in the coordinate space is a constant function.
By definition, $\ket{e_0}$ is orthogonal to $\ket{u}$. Because $\bm{K}^R\ket{e_0}\neq 0$ and  $\bm{K}_{\cancel{11}}^R\ket{e_0}=0$,  $\bm{K}^R\ket{e_0}$ is a nonzero vector proportional to $\ket{u}$. Hence, $\psi_0(z)\neq 0$. 

Our goal is to prove that the ansatz given by Eqs.\ \eqref{eqSPsi0ansatz} and \eqref{eqSPsi0ansatzPh} is the zero-energy eigenstate of the Hamiltonian in Eq.\ \eqref{EqEffHamil1}.
Applying  $H^{(1)}$ in Eq.\ \eqref{EqEffHamil1} to the  Fourier transform of the ansatz in Eq.\ \eqref{eqSPsi0ansatz}, we get 
\eq{
H^{(1)}\ket{\psi_0}&=\int_{-\infty}^{+\infty} dk\ \left[ \epsilon(k) \psi_0(k)C^\dagger(k)\ket{0,g} +V^*(k)\braket{u|e_0} C^\dagger(k)\ket{0,g} + \psi_0(k)V(k)\ket{u}\right]+\bm{K}^R\ket{e_0}, \label{eqSHpsi0}
}
where the momentum-space photon wavefunction $\psi_0(k)= -V(0)^{-1}\braket{u|\bm{K}^R|e_0}\delta(k)$ is the Fourier transform of Eq.\ \eqref{eqSPsi0ansatzPh}.  Since the dispersion relation satisfies $\epsilon(0)=0$, the first term on the r.h.s of  Eq.\ \eqref{eqSHpsi0} is zero: $\int_{-\infty}^{+\infty} dk\  \epsilon(k) \psi_0(k)C^\dagger(k)\ket{0,g}=0$.
Because $\braket{u|e_0}=0$, the second term on the r.h.s of  Eq.\ \eqref{eqSHpsi0} is also equal to $0$. The third term
\eq{
\int_{-\infty}^{+\infty} dk\  \psi_0(k)V(k)\ket{u}=-\ket{u}\bra{u}\bm{K}^R\ket{e_0}
}
cancels with the fourth term $\bm{K}^R\ket{e_0}$ on the r.h.s of  Eq.\ \eqref{eqSHpsi0} because $\bm{K}^R\ket{e_0}$ is proportional to $\ket{u}$. Therefore, $H^{(1)}\ket{\psi_0}=0$, and this is the end of the proof for Lemma  \ref{lemmaBounds}.  

\end{proof}
Combining Lemmas \ref{Lemma9maintext} and \ref{lemmaBounds}, we can obtain Theorem \ref{theorem1}.

\end{proof}
\section{Levinson's theorem \label{SecLevinson}}  
In this section, we prove Levinson's theorem for the class of emitter scattering models with $\ket{v_k}=V(k)\ket{u}$, i.e. Eq.\ (18) in the main text. 
Let us restate the objective of our proof in the following theorem:
\begin{theorem} \label{theoremLevinsonEmitter} 
   Denote the continuum spectrum by $\mathcal{R}_c$. We assume that there are no bound states in the continuum or bright zero-energy eigenstates in the system.  For dissipative systems, we assume that $\det[\boldsymbol{S}(E)] \neq 0$ for $E\in \mathcal{R}_c$. The winding phase $\Delta \delta$ of $\det[\boldsymbol{S}(E)]$ around the origin is defined as  
\eq{
 2\Delta \delta&=-i \int_{\mathcal{R}_c} dE\frac{\partial_E\det[\boldsymbol{S}(E)]}{\det[\boldsymbol{S}(E)]}. \label{eqDeltaPhi}
  }

 Suppose $\ket{v_k}=V(k)\ket{u}$ satisfies the properties listed in Theorem \ref{theorem1} and the  dispersion relation is given by $\epsilon(k)=\sigma|d|k^m$, where $\sigma=\pm 1$ and $m\geq 2$ is an integer.  We have
\eq{
\Delta \delta=\pi (N-N_B)+\pi \frac{m-1}{m} \label{eqThLevImp},
}
where $N$ is the number of emitters and $N_B$ is the number of bound states.


\end{theorem}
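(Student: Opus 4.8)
The plan is to reduce the full $N$-emitter winding to the boundary values of a single scalar analytic function and then apply the argument principle, isolating an anomalous contribution at the branch point $\omega=0$. First I would exploit the rank-one structure established in Theorem \ref{theorem1}: since $\boldsymbol{K}(\omega)=\ket{u}\bra{u}K(\omega)$, the matrix $\boldsymbol{H}(\omega)\equiv\omega\mathbb{1}_N-\boldsymbol{K}^R-\boldsymbol{K}(\omega)$ is a rank-one modification of $\boldsymbol{M}(\omega)\equiv\omega\mathbb{1}_N-\boldsymbol{K}^R$. Writing the $S$-matrix from Eq.\ (5) as $S_{\alpha\beta}=\delta_{\alpha\beta}-2\pi i\,g(E)\,a_\alpha a_\beta^{*}$ with $a_\alpha=V(k_\alpha)/\sqrt{|\epsilon'(k_\alpha)|}$ and $g(E)=\bra{u}\boldsymbol{H}(E+i0^+)^{-1}\ket{u}$, the matrix determinant lemma gives $\det\boldsymbol{S}(E)=1-2\pi i\,g(E)\sum_\alpha|a_\alpha|^2$, valid for both the scalar (odd $m$) and $2\times2$ (even $m$) cases. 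The Sokhotski--Plemelj identity applied to $K(\omega)$ identifies $\sum_\alpha|a_\alpha|^2$ with the discontinuity $[K(E-i0^+)-K(E+i0^+)]/(2\pi i)$, while Sherman--Morrison gives $g=m_0/(1-Km_0)$ with $m_0(\omega)\equiv\bra{u}\boldsymbol{M}(\omega)^{-1}\ket{u}$ and $\det\boldsymbol{H}(\omega)=\det\boldsymbol{M}(\omega)\,[1-K(\omega)m_0(\omega)]$. Since $m_0$ is rational and continuous across the cut, collecting terms collapses everything to the clean identity $\det\boldsymbol{S}(E)=\det\boldsymbol{H}(E-i0^+)/\det\boldsymbol{H}(E+i0^+)$.

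Writing $D(\omega)\equiv\det\boldsymbol{H}(\omega)$, which is analytic in the cut plane and whose zeros are precisely the $N_B$ bound states (poles of $\boldsymbol{G}$), the definition of $\Delta\delta$ becomes a difference of boundary values of $\arg D$. Because $|\det\boldsymbol{S}|=1$ at both ends of the continuum (it equals $1$ at $E_{max}$ and the unimodular universal value $\exp(\pm 2\pi i/m)$ at $E=0^\pm$), the modulus part drops out of Eq.\ (17) and $2\Delta\delta$ equals the total continuous change of $\arg\det\boldsymbol{S}=\arg D(E-i0^+)-\arg D(E+i0^+)$ across $\mathcal{R}_c$. I would then read this as the combined phase change along the two straight edges of a contour $C_{\rm cut}$ hugging the continuum. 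Closing the loop, I deform a large circle $|\omega|=R$---on which $D\sim\omega^N$ so that its total phase is $2\pi N$---inward; the argument principle yields $2\pi N=2\pi N_B+\oint_{C_{\rm cut}}d\arg D$, so the cut-hugging loop carries total phase $2\pi(N-N_B)$.

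The loop $C_{\rm cut}$ splits into the two edges---whose combined phase change is exactly $2\Delta\delta$---plus the small arcs that close it at $\omega=\infty$ and at the branch point $\omega=0$. The arc at infinity subtends vanishing angle and contributes nothing. The arc $A_0$ around the origin is where the anomaly lives: under the no-bright-state hypothesis, the expansion $D(\omega)=\det\boldsymbol{M}(\omega)-K(\omega)\det(\omega\mathbb{1}_{N-1}-\boldsymbol{K}^R_{\cancel{11}})$ is dominated as $\omega\to0$ by the divergent term $-K(\omega)\det(-\boldsymbol{K}^R_{\cancel{11}})$, with $K(\omega)\to|V(0)|^2 L(\omega)$. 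Using the explicit form of $L(\omega)$ in Eq.\ (13), its phase is controlled by $\exp(-i\theta\,\tfrac{m-1}{m})$ with $\omega=\sigma e^{i\theta}|\omega|$, so encircling the origin ($\theta:0\to2\pi$ for even $m$) changes $\arg D$ by $-2\pi\tfrac{m-1}{m}$. Subtracting this from the loop identity gives $2\Delta\delta=2\pi(N-N_B)+2\pi\tfrac{m-1}{m}$, i.e.\ Eq.\ (18) for even $m$; the quadratic case $m=2$ reproduces the familiar $\pi/2$ shift.

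For odd $m$ the continuum is $(-\infty,0)\cup(0,\infty)$, the branch cut covers the whole real axis, and $D$ is discontinuous across the origin; I would run the same deformation but indent $C_{\rm cut}$ around $\omega=0$ with small semicircles in the upper and lower half-planes, using the two $\theta$-dependent values of $\kappa_m$ from Table \ref{table_value}. Summing the windings of the two continuous $S(E)$-trajectories then reproduces the $2\pi(N-N_B)$ bound-state count together with the same anomalous $2\pi\tfrac{m-1}{m}$, establishing Eq.\ (18). The main obstacle is the rigorous control at the origin: I must verify that $D(\omega)$ carries exactly the winding $-2\pi\tfrac{m-1}{m}$ there, which requires that the no-bright-state condition genuinely pins the leading small-$\omega$ behavior to the $K(\omega)\propto L(\omega)$ term---handling the degenerate case where $\boldsymbol{K}^R_{\cancel{11}}$ is singular, whose null vectors are then shared with $\boldsymbol{K}^R$ (by Lemma \ref{lemmaBounds}) and contribute only integer windings absorbed into $N-N_B$---and, for odd $m$, that the half-plane indentations with discontinuous $\kappa_m$ assemble correctly. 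Everything else is bookkeeping of analytic phases.
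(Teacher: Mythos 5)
Your proposal follows essentially the same route as the paper's: you derive the key identity $\det\boldsymbol{S}(E)=\det\boldsymbol{H}(E-i0^+)/\det\boldsymbol{H}(E+i0^+)$ (the paper's Theorem \ref{TheoremDetS}, obtained there via the matrix determinant lemma for general $\ket{v_k}$, which your Sherman--Morrison computation reproduces for the rank-one case), then close the cut-hugging contour with large arcs contributing $2\pi N$ from $J(\omega)\sim\omega^N$, count bound-state zeros via the argument principle, and extract the anomalous $2\pi\frac{m-1}{m}$ at the branch point from the phase of $L(\omega)$, with the no-bright-state condition (via Lemma \ref{lemmaBounds}) ensuring the residual factor $g(\omega)=J(\omega)/L(\omega)$ winds trivially there. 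The structure, key identities, and treatment of the even/odd $m$ cases all match the paper's proof.
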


The main idea of the proof is to define an analytic continuation of $\det[\boldsymbol{S}(E)]$ to the complex plane and observe the fact that the bound state energies are the poles of this function. The proof is similar to our previous work \cite{wang2018single}, where we proved Levinson's theorem for photon-emitter models with linear dispersion relations.

In preparation for the proof of Theorem \ref{theoremLevinsonEmitter},  we introduce  Theorem \ref{TheoremDetS}, where we propose an analytic function $s(\omega)$ that is equal to the analytic continuation of $\det[\boldsymbol{S}(E)]$ to the complex plane.  Though introduced here as a tool for proving Theorem \ref{theoremLevinsonEmitter}, Theorem \ref{TheoremDetS} provides a quick method to compute $\det[\boldsymbol{S}(E)]$ using $\bm{K}^{R}$ and $\bm{K}(E+i0^\pm)$  and is an important theorem itself. We comment that the range of application of Theorem \ref{TheoremDetS} is well beyond the class of photon-emitter models discussed in this letter: it can be applied to general photon-emitter interactions $\ket{v_k}$ and other dispersion relations beyond $\epsilon(k)=\pm |d|k^m$. 

\begin{theorem}\label{TheoremDetS}

Define $J(\omega)=\det[\omega \mathbb{1}_N-\boldsymbol{K}^R-\boldsymbol{K}(\omega)]$ as a function on the complement of the continuum spectrum $\mathcal{R}_c$ in the complex plane.  For the values of $\omega$ s.t. $J(\omega)\neq 0$, we can define $s(\omega)=\frac{J(\omega^*)}{J(\omega)}$. 
When $E$ is not equal to the energy of a bound state in the continuum, 
 \eq{
s(E+i0^+) =\det[\boldsymbol{S}(E)]. \label{eqSSS} 
 }
 \end{theorem}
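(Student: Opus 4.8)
The plan is to exploit the rank-one structure $\ket{v_k}=V(k)\ket{u}$, which collapses the whole scattering problem onto a single scalar Green's function $g(\omega)\equiv\braket{u|\boldsymbol{G}(\omega)|u}$. From Eq.\ \eqref{eqSTG} the on-shell $T$-matrix factorizes as $T[\omega,k_\alpha,k_\beta]=V^*(k_\beta)V(k_\alpha)\,g(\omega)$, so inserting this into Eq.\ \eqref{eqSgvT} and defining the vector $w_\alpha\equiv V(k_\alpha)/\sqrt{|\epsilon'(k_\alpha)|}$ over the degenerate momenta $\{k_\alpha(E)\}$ turns the scattering matrix into a rank-one perturbation of the identity, $\boldsymbol{S}(E)=\mathbb{1}-2\pi i\,g(E+i0^+)\ket{w}\bra{w}$. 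The matrix determinant lemma then gives $\det[\boldsymbol{S}(E)]=1-2\pi i\,g(E+i0^+)\braket{w|w}$, reducing the claim to evaluating the two scalars $\braket{w|w}=\sum_\alpha |V(k_\alpha)|^2/|\epsilon'(k_\alpha)|$ and $g(E+i0^+)$. A virtue of this reduction is that it is insensitive to the number of degenerate momenta, so it applies uniformly to odd and even $m$ and to more general dispersions.

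The second step is to identify $\braket{w|w}$ with the discontinuity of $K(\omega)$ across the continuum. Applying the Sokhotski--Plemelj identity to $K(\omega)=\int dk\,|V(k)|^2/(\omega-\epsilon(k))$ together with $\delta(E-\epsilon(k))=\sum_\alpha \delta(k-k_\alpha)/|\epsilon'(k_\alpha)|$ yields $K(E+i0^+)-K(E-i0^+)=-2\pi i\braket{w|w}$. In parallel I would reuse the cofactor identity already proved for Theorem \ref{theorem1}: writing $P(\omega)\equiv\det(\omega\mathbb{1}_{N-1}-\boldsymbol{K}^R_{\cancel{11}})$ and $Q(\omega)\equiv\det(\omega\mathbb{1}_N-\boldsymbol{K}^R)$, Eqs.\ \eqref{eqSsandH}--\eqref{eqSdetH} give $g(\omega)=P(\omega)/J(\omega)$ with the factorization $J(\omega)=Q(\omega)-K(\omega)P(\omega)$. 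Since $P$ and $Q$ are polynomials, hence continuous across the cut, only $K$ carries the jump, and in particular $g(E+i0^+)=P(E)/J(E+i0^+)$.

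Assembling the pieces is then pure algebra. Substituting $2\pi i\braket{w|w}=K(E-i0^+)-K(E+i0^+)$ and $g(E+i0^+)=P(E)/J(E+i0^+)$ into the determinant lemma gives $\det[\boldsymbol{S}(E)]=\{J(E+i0^+)+P(E)[K(E+i0^+)-K(E-i0^+)]\}/J(E+i0^+)$; inserting $J(E\pm i0^+)=Q(E)-K(E\pm i0^+)P(E)$ collapses the numerator exactly to $Q(E)-K(E-i0^+)P(E)=J(E-i0^+)$. Because $(E+i0^+)^*=E-i0^+$, this is precisely $J(\omega^*)/J(\omega)=s(\omega)$ evaluated at $\omega=E+i0^+$, which is Eq.\ \eqref{eqSSS}.

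The main obstacle is not any single computation but the bookkeeping that keeps every object well-defined on the cut. I expect the delicate points to be: (i) justifying the Sokhotski--Plemelj step under only the local square-integrability and decay hypotheses on $V(k)$ inherited from Theorem \ref{theorem1}, so that $K(E\pm i0^+)$ possess genuine boundary values with the stated jump; and (ii) the hypothesis that $E$ is not the energy of a bound state in the continuum, which is exactly the requirement $J(E+i0^+)\neq 0$ needed for $g(E+i0^+)$ to be finite and for the quotient defining $s$ to make sense. Once the rank-one reduction is in place, neither the possible non-Hermiticity of $\boldsymbol{K}^R$ nor the multiplicity of degenerate momenta presents any difficulty, since both the determinant lemma and the polynomial identities for $P$, $Q$, and $J$ hold verbatim in the non-Hermitian setting.
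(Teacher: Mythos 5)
Your proposal is correct, and the algebra checks out: with $g(\omega)=\braket{u|\boldsymbol{G}(\omega)|u}=P(\omega)/J(\omega)$ and $2\pi i\braket{w|w}=K(E-i0^+)-K(E+i0^+)$, the numerator $J(E+i0^+)+P(E)[K(E+i0^+)-K(E-i0^+)]$ indeed telescopes to $Q(E)-K(E-i0^+)P(E)=J(E-i0^+)$. However, your route is genuinely different from the paper's and strictly less general. The paper never invokes the separable structure $\ket{v_k}=V(k)\ket{u}$: it assembles the $N\times n(E)$ matrix $\bm{A}$ with columns $|v_{k_\alpha}\rangle/\sqrt{|\epsilon'(k_\alpha)|}$, identifies the cut discontinuity $\boldsymbol{K}(E+i0^+)-\boldsymbol{K}(E+i0^-)=-2\pi i\,\bm{A}\bm{A}^\dagger$ (the same Sokhotski--Plemelj step you use, but at the matrix level), and then applies the Sylvester-type identity $\det[\mathbb{1}_N-2\pi i\bm{A}\bm{A}^\dagger\boldsymbol{G}]=\det[\mathbb{1}_{n(E)}-2\pi i\bm{A}^\dagger\boldsymbol{G}\bm{A}]$ to convert $s(E+i0^+)=\det[\mathbb{1}_N+(\boldsymbol{K}(E+i0^+)-\boldsymbol{K}(E+i0^-))\boldsymbol{G}(E+i0^+)]$ directly into $\det[\boldsymbol{S}(E)]$. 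That argument holds for arbitrary couplings $\ket{v_k}$ and arbitrary momentum degeneracy, which is exactly the generality the paper advertises for this theorem (it is later reused verbatim in higher dimensions). Your rank-one reduction buys a more elementary, purely scalar computation that reuses the cofactor identities from Theorem \ref{theorem1}, and it fully suffices for the application to Levinson's theorem for the class $\ket{v_k}=V(k)\ket{u}$; but it would not establish the theorem for non-separable $\ket{v_k}$, where $\boldsymbol{K}(\omega)$ is no longer rank one and $g(\omega)=P(\omega)/J(\omega)$ no longer captures the full $T$-matrix. Your identification of the no-bound-state-in-continuum hypothesis with $J(E+i0^+)\neq 0$ matches the paper's reading.
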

 We comment that the bound state energies $E_B$ correspond to the poles of the emitter propagator $\bm{G}(\omega)=[\omega \mathbb{1}_N-\boldsymbol{K}^R-\boldsymbol{K}(\omega)]^{-1}$, hence they  satisfy $J(E_B)=0$. 
 \begin{proof} 
 Let $n(E)$ denote the momentum degeneracy at energy $E$ and $k_1, \dots k_{n(E)}$ the degenerate momenta at energy $E$.
 When $\epsilon(k)=\pm |d| k^m$ and $E\in \mathcal{R}_c$, $n(E)=1$ for odd $m$ and $n(E)=2$ for even $m$. 
According to Eqs.\ (5)-(8) in the main text, the $S$-matrix $\bm{S}(E)$ is a $n(E)\times n(E)$ matrix, whose matrix elements are given by 
 \eq{
S_{\alpha\beta}(E)=\delta_{\alpha\beta}- \frac{2\pi i}{\sqrt{|\epsilon'(k_\alpha(E))\epsilon'(k_\beta(E))|}}\braket{v_{k_\beta(E)}|\boldsymbol{G}(E+i0^+)|v_{k_\alpha(E)}},\label{eqSgvTs}
}
where $\alpha, \beta \in (1,2, \dots, n(E))$. 

Note that in writing down Eq.\ \eqref{eqSgvTs}, we have implicitly assumed that the limit $\bm{G}(E+i0^+)\equiv \lim_{\eta\rightarrow 0^+}\bm{G}(E+i\eta)$ exists.
However, if $E_c \mathbb{1}_N-\boldsymbol{K}^R-\boldsymbol{K}(E_c+ i0^+)$ has a zero eigenvalue for some energy $E_c\in \mathcal{R}_c$,  $\bm{G}(E_c+ i\eta)$ does not have a limit when $\eta\rightarrow 0^+$ and $E_c$ corresponds to the energy of a bound state in the continuum. This is why the theorem only applies to  $E\neq E_c$. 



Construct $\bm{A}$ as a $N\times n(E)$ matrix and $\bm{A}^\dagger$ its Hermitian conjugate: 
\eq{
 \bm{A}&=\left(\begin{matrix}
\frac{1}{\sqrt{{|\epsilon'(k_1)|}}} |v_{k_1}\rangle, \dots
\frac{1}{\sqrt{{|\epsilon'(k_{n(E)})|}}}|v_{k_{n(E)}}\rangle
\end{matrix}\right), \quad 
} then the $n(E)\times n(E)$ matrix 
$\bm{S}(E)$ for $E\neq E_c$ can be written as 
\eq{
\bm{S}(E)=\mathbb{1}_{n(E)}-2\pi i \bm{A}^\dagger \boldsymbol{G}(E+i0^+) \bm{A}\label{eqSnTnexp},
} 
where $\mathbb{1}_{n(E)}$ is an identity matrix of dimension $n(E)$. 

Using the definitions of $s(\omega), J(\omega)$ and the properties of determinant, we have, 
  \eq{
s(E+i0^+)
&=\det( \mathbb{1}_N+(\boldsymbol{K}(E+i0^+)-\boldsymbol{K}(E+i0^-))\boldsymbol{G}(E+i0^+)),\label{eqSsEp0}
 }
where $ \boldsymbol{K}(E+i0^+)-\boldsymbol{K}(E+i0^-)$ can be re-written as
  \eq{
 \boldsymbol{K}(E+i0^+)-\boldsymbol{K}(E+i0^-)&=\int_{-\infty}^{+\infty}dk\ |v_k\rangle\langle v_k|\left(\frac{1}{E+i0^+-\epsilon(k)}-\frac{1}{E-i0^--\epsilon(k)}\right),\nonumber\\
 &=-\int_{-\infty}^{+\infty}dk\ |v_k\rangle\langle v_k|2\pi i  \delta(E-\epsilon(k)), \nonumber\\
 &=-2\pi i \sum_{\alpha=1}^{n(E)} \frac{1}{|\epsilon'(k_\alpha)|}|v_{k_\alpha}\rangle \langle v_{k_\alpha}|=-2\pi i \bm{A}\bm{A}^\dagger. \label{eqSKpKm}
 }
Inserting Eq.\ \eqref{eqSKpKm} into Eq.\ \eqref{eqSsEp0}, we get
 \eq{
s(E+i0^+)=\det\left[\mathbb{1}_N-2\pi i \bm{A}\bm{A}^\dagger\boldsymbol{G}(E+i0^+)\right].\label{eqdetratioexp}
}
According to a standard result in linear algebra known as the extension of the matrix determinant lemma, given an invertible $N\times N$ matrix $-2\pi i\boldsymbol{G}(E+i0)$ and a  $N\times n(E)$ matrix $\bm{A}$, 
\eq{
\det[\mathbb{1}_N-2\pi i\bm{A}\bm{A}^\dagger\boldsymbol{G}(E+i0^+)]=\det[\mathbb{1}_{n(E)}-2\pi i\bm{A}^\dagger\boldsymbol{G}(E+i0^+)\bm{A}] \label{eqSdoubledet}.
}
Using Eqs.\ \eqref{eqSnTnexp} and \eqref{eqdetratioexp}, we see that the l.h.s and r.h.s of Eq.\ \eqref{eqSdoubledet} are equal to $s(E+i0^+)$ and $\det[\boldsymbol{S}(E)]$, respectively. This is the end of the proof for Theorem \ref{TheoremDetS}.

 \end{proof}

We proceed to prove Theorem \ref{theoremLevinsonEmitter} with the help of Theorem  \ref{TheoremDetS}. 
\begin{proof} (Theorem \ref{theoremLevinsonEmitter})

 First consider  $\epsilon(k)=|d|k^m$ with odd $m$, in which case the continuum spectrum is  $\mathcal{R}_c=(-\infty,0)\cup (0,+\infty)$. Since we have assumed that there is no bound state in the continuum,  using Theorem\ \ref{TheoremDetS}, we can replace $\det[\boldsymbol{S}(E)]$  by $\frac{J(E+i0^-)}{J(E+i0^+)}$ for $E\in \mathcal{R}_c$ and  rewrite Eq.\ \eqref{eqDeltaPhi} in terms of a contour integral in the complex plane:
\eq{
 2\Delta \delta & = -i \int_{\mathcal{R}_c} d E \bigg[ \frac{\partial_{E} J(E+ i 0^-)}{J(E + i0^-)} - 
 \frac{\partial_{E} J(E + i 0^+)}{J(E + i0^+)} \bigg] =
 -i\int_{\mathcal{R}_1^++\mathcal{R}_1^-+\mathcal{R}_2^++\mathcal{R}_2^-}d\omega \frac{\partial_{\omega} J(\omega)}{J(\omega)}, \label{eqfirstline}
}
where $E$ is a real coordinate, $\omega$ is a complex coordinate, and the integration contours $\mathcal{R}_1^{\pm}$ and $\mathcal{R}_2^{\pm}$ are illustrated by the dashed lines in Fig.\ \ref{subfig:cubic_eye}.

\begin{figure}
    \centering
    
    \subfloat[$\epsilon(k)=|d| k^m$, odd $m$]{
    \includegraphics[width=0.33\linewidth]{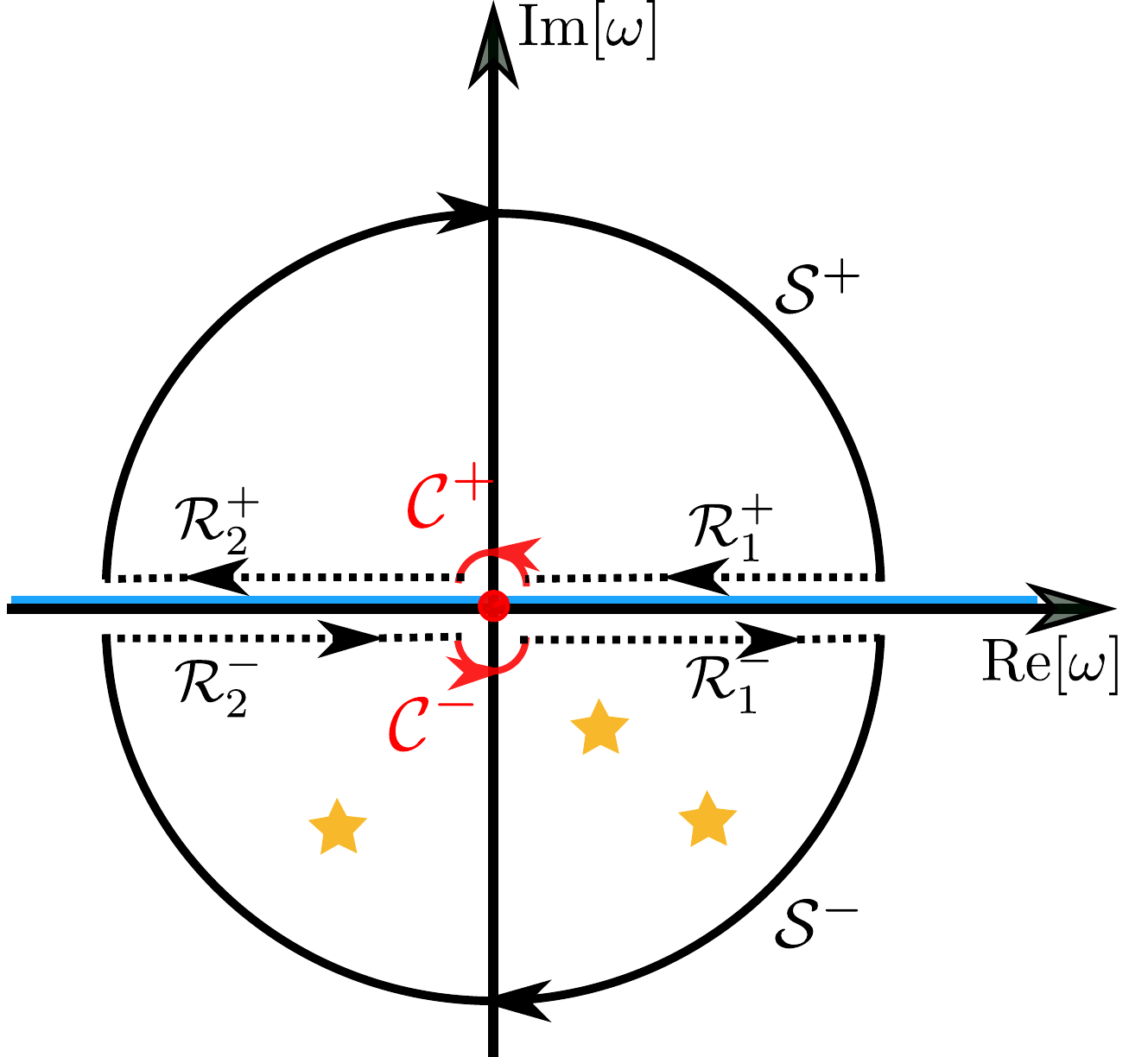}   \label{subfig:cubic_eye}
}
     \subfloat[$\epsilon(k)=|d|k^m$, even $m$]{
    \includegraphics[width=0.33\linewidth]{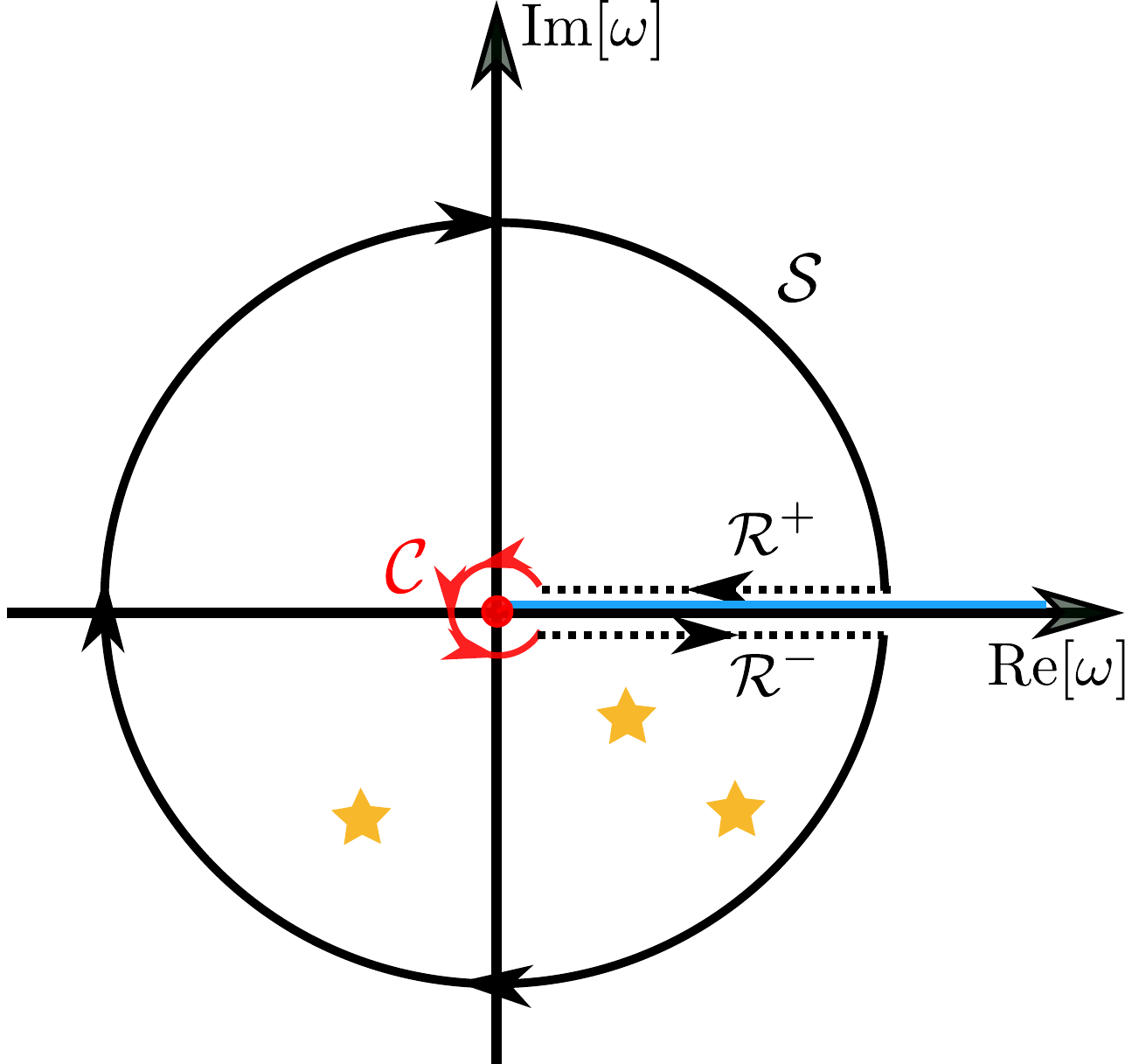}  \label{subfig:quadratic_eye} 
}

      \caption{Illustration of the integration contours for the calculation of the winding phase of $\det[\boldsymbol{S}(E)]$.   
      (a) Contours for a dispersion relation $\epsilon(k)= |d|k^m$ with odd $m$.
      (b) Contours for a dispersion relation $\epsilon(k)= |d|k^m$ with even $m$. 
      The density of states diverges at the origin $E=0$, marked by the red dot. 
      The black lines represent the continuum spectrum,  while the yellow stars represent bound-state energies. 
      The dashed lines with arrows are the integration paths for the evaluation of the  winding number of $\det[\boldsymbol{S}(E)]$. 
      The semicircles (circles) are added to form closed integration contours so that the residue theorem can be invoked. 
      The red semicircles (circle) go around the origin with an infinitesimal radius, while the black semicircles (circle) have an infinite radius.}
 \end{figure}
$\mathcal{R}_2^{\pm}$ and $\mathcal{R}_1^{\pm}$ represent the contours just above/below the real line for $E<0$ and $E>0$, respectively.  
We can obtain two closed integration contours by adding a pair of semicircles $\mathcal{C}^{\pm}$ with an infinitesimal radius around $0$ and a pair of semicircles $\mathcal{S}^{\pm}$ with radius $|\omega|\rightarrow \infty$. 
Equation \eqref{eqfirstline} can then be rewritten as
 \eq{
2 \Delta \delta &=-i\oint d\omega\frac{\partial_{\omega} J(\omega)}{J(\omega)} +i \int_{\mathcal{S}^++\mathcal{S}^-} d\omega\frac{\partial_{\omega} J(\omega)}{J(\omega)} +i\int_{\mathcal{C}^++\mathcal{C}^-}  d\omega\frac{\partial_{\omega} J(\omega)}{J(\omega)},\label{eqWinding}
 }
 where $\oint$ represents the sum of integrals over the two closed contours. For odd $m$, $J(\omega)$ is analytic in the complement of the real line in the complex plane.   The poles of $J^{-1}(\omega)$ correspond to the bound state energies;  they can only be located below the real line given our assumption that there is no bound state in the continuum. This also implies that  when $\bm{K}^R$ is Hermitian, $N_B=0$.  Applying the residue theorem, the closed contours in the upper and lower half planes yield $0$ and $-2\pi i N_B$, respectively. Hence,
 \eq{
 -i\oint d\omega\frac{\partial_{\omega} J(\omega)}{J(\omega)}=-2\pi  N_B. \label{eqSLSclosed}
 }

 Next, we evaluate the integrals along the small semicircles. $J(\omega)$ is equal to $\det[\bm{H}(\omega)]$ in Eq.\ \eqref{eqSdetH}, which shows that  $J(\omega)\sim K(\omega)\sim L(\omega)$ when $\omega\rightarrow 0$.  Intuitively, the winding phases of $J(\omega)$ along $\mathcal{C}^\pm$ are equal to the winding phases of $L(\omega)$ along $\mathcal{C}^\pm$, which contribute to the term $\pi \frac{m-1}{m}$ in Eq.\ \eqref{eqThLevImp}. To demonstrate it rigorously, we  write $J(\omega)$ as the product of $L(\omega)$ and another function $g(\omega)$:
 \eq{
J(\omega)&= L(\omega)g(\omega),\nonumber\\
g(\omega)&\equiv 
-L^{-1}(\omega)K(\omega)\det(\omega\mathbb{1}_{N-1}-\boldsymbol{K}^R_{\cancel{11}})+L^{-1}(\omega)\det(\omega\mathbb{1}_N-\boldsymbol{K}^R). \label{eqSdefig}
}
This way the winding phases of $J(\omega)$ along $\mathcal{C}^\pm$ can be evaluated as the sum of the winding phases of  $g(\omega)$ and $L(\omega)$:
 \eq{
i\int_{\mathcal{C}^\pm}  d\omega\frac{\partial_{\omega} J(\omega)}{J(\omega)}
&=i\int_{\mathcal{C}^\pm}  d\omega\frac{\partial_{\omega} L(\omega)}{L(\omega)}
+i\int_{\mathcal{C}^\pm}  d\omega\frac{\partial_{\omega} g(\omega)}{g(\omega)}. 
  }
   Using  Eqs.\ \eqref{eqKsum2}, the winding number of $L(\omega)$ can be evaluated explicitly in polar coordinates:
 \eq{
i\int_{\mathcal{C}^+}  d\omega\frac{\partial_{\omega} L(\omega)}{L(\omega)}&= i\lim_{r\rightarrow 0} \int_{0^+}^{\pi^-}d\theta\ \frac{\partial_\theta L(r,\theta)}{L(r,\theta)}, \nonumber\\
&=i\lim_{r\rightarrow 0} \int_{0^+}^{\pi^-}d\theta\ - i(m-1)/m, \nonumber\\
&=\pi (m-1)/m, \label{eqSintCp0}
 }
 where $\int_{0^+}^{\pi^-}d\theta \equiv \lim_{\theta_1\rightarrow 0^+,\theta_2\rightarrow \pi^-}\int_{\theta_1}^{\theta_2} d\theta$. The integral along  $\mathcal{C}^-$ can be evaluated similarly; and it has the same value as the integral along $\mathcal{C}^+$.
 

Next, we argue that the winding phases of $g(\omega)$  along $\mathcal{C}^\pm$ are equal  to $0$. Note that the contour $\mathcal{C}^+/\mathcal{C}^-$ is defined through two limiting processes taken consecutively on an arc centered at the origin of the complex plane.  In the first limit, we fix the radius of the arc and send both endpoints of the arc to infinitesimal distances above/below the real line, so the arc almost becomes a semicircle. In the second limit, the radius of the arc is sent to $0$. 
 Because of this, we need to  first examine  $g(E+ i\eta)$ when $\eta \rightarrow 0^\pm$, and then send $E\rightarrow 0$. 

Using Eq.\ \eqref{eqSdefig}, we see that $g(\omega)$ is an analytic function in the complement of the real line on the complex plane for odd $m$. Since $\lim_{\eta \rightarrow 0^\pm}L^{-1}(E+i\eta)K(E+i\eta)$ and $\lim_{\eta \rightarrow 0^\pm}L^{-1}(E+i\eta)$ exist for $E$ anywhere on the real line $\mathcal{R}$,  $\lim_{\eta\rightarrow 0^\pm}g(E+i\eta ) \equiv g(E+i0^\pm)$ exist  for $E\in \mathcal{R}$. Furthermore, since $\lim_{\omega \rightarrow 0}L^{-1}(\omega)K(\omega)=|V(0)|^2$,  $g(E+ i0^\pm)$ as functions of $E\in \mathcal{R}$ are  continuous at $E=0$.

The winding phase of $g(\omega)$ along $ \mathcal{C}^+$ is equal to the phase difference between $g(-|E|+i0^+)$ and $g(|E|+i0^+)$ 
in the limit $E\rightarrow 0$.    Similarly, the winding phase of $g(\omega)$ along $ \mathcal{C}^-$ is equal to the phase difference between $g(|E|+i0^-)$ and $g(-|E|+i0^-)$ 
  in the limit $E\rightarrow 0$.
Because of the continuity of $g(E+ i0^\pm)$ at  $E=0$, $i\int_{\mathcal{C}^++\mathcal{C}^-}  d\omega\frac{\partial_{\omega} g(\omega)}{g(\omega)}=0$.
Therefore, 
\eq{
i\int_{\mathcal{C}^++\mathcal{C}^-}  d\omega\frac{\partial_{\omega} J(\omega)}{J(\omega)}=i\int_{\mathcal{C}^++\mathcal{C}^-}  d\omega\frac{\partial_{\omega} L(\omega)}{L(\omega)}=2\pi (m-1)/m.\label{eqSintCp}
}

 At last, we evaluate the integral along the large semicircles, which can be written in polar coordinates as 
\eq{
i\int_{\mathcal{S}^+ +\mathcal{S}^-}  d\omega
\frac{\partial_{\omega} J(\omega)}{J(\omega)}=i\lim_{r\rightarrow \infty}\left(\int_{\pi^-}^{0^+}  d\theta \frac{\partial_\theta  J(r,\theta)}{J(r,\theta)}
+\int_{2\pi^-}^{\pi^+}  d\theta \frac{\partial_\theta  J(r,\theta)}{J(r,\theta)}\right).\label{eqSintegralSpmPolar}
}
Using Eq.\ \eqref{eqSdetH},  $J(\omega)$
can be written as 
\eq{
J(\omega)=-K(\omega)P_{N-1}(\omega) +P_{N}(\omega), \label{eqSJP}
}
where $P_{N-1}(\omega)\equiv\det(\omega\mathbb{1}_{N-1}-\boldsymbol{K}^R_{\cancel{11}})$ and $P_N(\omega)\equiv\det(\omega\mathbb{1}_N-\boldsymbol{K}^R)$ are polynomial functions of $\omega$ with degrees $N-1$ and $N$, respectively.
From the definition of $K(\omega)$ in Eq.\ \eqref{eqSKimp}, we can see that $\lim_{|\omega|\rightarrow \infty}K(\omega)=0$, hence  $J(\omega)\sim P_N(\omega)\sim \omega^N$ when $|\omega|\rightarrow \infty$; and we expect that the sum of the winding phases of $J(\omega)$ around the large semicircles is equal to $2\pi N$. In the following, we provide a careful mathematical analysis to verify this intuitive result. 

Taking the derivative of Eq.\ \eqref{eqSJP} w.r.t  $\theta$, we have
\eq{
\partial_{\theta}J(r,\theta)=i\omega \partial_{\omega}J(\omega)=-i\omega\partial_\omega K(\omega)P_{N-1}(\omega) -i\omega K(\omega)\partial_{\omega}P_{N-1}(\omega) +i\omega\partial_\omega P_{N}(\omega). \label{eqSdthetaJ}
} 
We can observe from Eq.\ \eqref{eqSKimp} that $ \lim_{|\omega|\rightarrow \infty}\partial_\omega K(\omega)=\lim_{|\omega|\rightarrow \infty} K(\omega)=0$. In addition, $\lim_{|\omega|\rightarrow \infty}\frac{P_{N-1}(\omega)}{P_N(\omega)}=\lim_{|\omega|\rightarrow \infty}\frac{\partial_{\omega}P_{N-1}(\omega)}{P_N(\omega)}=0$, 
hence
\eq{
\lim_{r\rightarrow \infty}\frac{\partial_{\theta}J(r,\theta)}{J(r,\theta)}=\lim_{|\omega|\rightarrow \infty}\frac{ i\omega\partial_\omega P_{N}(\omega)}{P_{N}(\omega)}=iN \label{eqLimrinfpartialJ}
}
 uniformly in $\theta\in (0,\pi)\cup (\pi,2\pi)$. Applying the dominated convergence theorem,  we can evaluate the $r\rightarrow \infty$ limit  of the following definite integral as a function of the integration end points $\theta_1, \theta_2\in (0,\pi)$ (or  $\theta_1, \theta_2\in (\pi, 2\pi)$):
\eq{
i\lim_{r\rightarrow \infty}\int_{\theta_1}^{\theta_2} d\theta\ \frac{\partial_\theta  J(r,\theta)}{J(r,\theta)}=(\theta_1-\theta_2) N. \label{eqlimthetarinftyJ}
}
The limit in Eq.\ \eqref{eqlimthetarinftyJ} is  uniform in  $\theta_1, \theta_2$ because the limit in Eq.\ \eqref{eqLimrinfpartialJ} is uniform in $\theta$. This implies that, when we evaluate Eq.\ \eqref{eqSintegralSpmPolar}, we can exchange the limit in $r$ and the limits in the integration end points:
\eq{
i\int_{\mathcal{S}^+ }  d\omega
\frac{\partial_{\omega} J(\omega)}{J(\omega)} & \equiv 
 i \lim_{r\rightarrow 0}\lim_{\theta_1\rightarrow \pi^-}\lim_{ \theta_2\rightarrow 0^+} \int_{\theta_1}^{\theta_2} d\theta\ \frac{\partial_\theta  J(r,\theta)}{J(r,\theta)}\nonumber\\
&=i \lim_{\theta_1\rightarrow \pi^-}\lim_{ \theta_2\rightarrow 0^+} \lim_{r\rightarrow 0}\int_{\theta_1}^{\theta_2} d\theta\ \frac{\partial_\theta  J(r,\theta)}{J(r,\theta)}=\pi N, \label{eqSintSp}
}
where we have used Eq.\ \eqref{eqlimthetarinftyJ} in evaluating Eq.\ \eqref{eqSintSp}. 
The integration along $\mathcal{S}^-$ can be evaluated similarly, and it has the same value as Eq.\ \eqref{eqSintSp}; hence we get 
\eq{
i\int_{\mathcal{S}^+ +\mathcal{S}^-} d\omega \frac{\partial_{\omega} J(\omega)}{J(\omega)}=2\pi  N .\label{eqSintegralSpmPolar2}
}
Combining Eqs. \eqref{eqWinding}, \eqref{eqSLSclosed}, \eqref{eqSintCp} and \eqref{eqSintegralSpmPolar2}, we obtain Eq.\ \eqref{eqThLevImp} for the dispersion relation $\epsilon(k)=|d|k^m$ with odd $m$.
The case of $\epsilon(k)=-|d|k^m$ with odd $m$ can be proved identically once we replace $E$ with $-E$. 

Next, we discuss the case of $\epsilon(k)=|d|k^m$ with even $m$. 
Similarly as in the case of odd $m$, the winding phase of $\det[\boldsymbol{S}(E)]$ can be evaluated as
\eq{
2 \Delta \delta & =\int_{0^+}^{\infty} d E \bigg[ \frac{\partial_{\omega} J(E- i 0)}{J(E - i0)} - 
 \frac{\partial_{\omega} J(E + i 0)}{J(E + i0)} \bigg] =\int_{\mathcal{R}^++\mathcal{R}^-}d\omega \frac{\partial_{\omega} J(\omega)}{J(\omega)} \nonumber\\
 &=\oint d\omega\frac{\partial_{\omega} J(\omega)}{J(\omega)} - \int_{\mathcal{S}} d\omega\frac{\partial_{\omega} J(\omega)}{J(\omega)} -\int_{\mathcal{C}}  d\omega\frac{\partial_{\omega} J(\omega)}{J(\omega)},
 }
 where the integration contours are illustrated in Fig.\ \ref{subfig:quadratic_eye}. 
 $\mathcal{R}^\pm$ represent the contours just above and below the real line along the continuum spectrum.  
 $\oint$ represents the integration over the  closed contour.
Following a procedure similar to the case of odd $m$, it is easy to show that the result of this integral is also given by Eq.\ \eqref{eqThLevImp}. The case of $\epsilon(k)=-|d|k^m$ with even $m$ can be proved similarly. 
 
This is the end of the proof for Theorem \ref{theoremLevinsonEmitter}.
\end{proof}

{\color{black}

\section{Generalization to spatial dimension $D$ and non-integer $m$ \label{SSecGen} }
\subsection{Angular momentum eigenstates in $D$ dimension}
In the main text and  Sec.\ \ref{SSecDelta} of the Supplement,  we have focused on 1D systems with dispersion relations $\epsilon(k)=\pm|d|k^m$, where $m$ is a positive integer. To demonstrate the generality of the principle that divergent density of states leads to a nontrivial universal limit of the $S$-matrix, we extend the discussion to all dimensions $D\geq 1$ and non-integer values of $m$. 
Let $\bm{k}$ denote the momentum vector in integer spatial dimension $D\geq 1$. For simplicity, we assume a dispersion relation with rotational symmetry: $\epsilon(k)=|\bm{k}|^a$, where $a>0$ does not have to be an integer.  These dispersion relations are natural extensions of the even integer $m$ case of $\epsilon(k)= \sigma |d| k^m$ in the one-dimensional models.  The odd integer extensions of this analytic dispersion relation do not have natural analogs for $D>1$.
The density of states is defined as 
\eq{
\rho(E)&=\int  d^D k\ \delta(E-|\bm{k}|^a)=b(D)\int_0^{+\infty} dk\ k^{D-1} \delta(E-k^a)=b(D)k^{D-1}\epsilon'(k)^{-1}, \label{eqSrhoEdefinition}
}
where the constant $b(D)=\frac{2\pi^{D/2}}{\Gamma(D/2) } $ comes from the integration over the solid angle of a $(D-1)$-sphere. $\Gamma(z)$ is the gamma function. Evaluating Eq.\ \eqref{eqSrhoEdefinition}, we have 
\eq{
\rho(E)=\frac{b(D)}{a}E^{-1+\zeta^{-1}}=\frac{b(D)}{a} k^{D-a}, \quad \zeta\equiv\frac{a}{D}. \label{eqSrhohigher}
}
where $k\equiv E^{1/a}$.
When $D=1$, we have $b(D)=2$ and $\zeta=a$, so that Eq.\ \eqref{eqSrhohigher} agrees with the value of $\rho(E)$ in the main text for 
$a = m$ for positive even integer $m$. For general values of $D$,  $\rho(E)$ diverges when $m> D$; we will show that the $S$-matrix goes to a nontrivial limit dependent on $a/D$ at zero energy. $\rho(E)$  has a finite limit at $E=0$ when $a\leq D$; we will show that the $S$-matrix goes to the identity matrix at zero energy.  

Let us first study the $S$-matrix in $D$ dimensions. The momentum-space representation of the scattering operator 
is given by
\eq{
\mathcal{S}(\bm{k},\bm{k'})&=\delta(\bm{k}\!-\!\bm{k}')-2\pi i \delta[\epsilon(\bm{k})\!-\!\epsilon(\bm{k'})] T(E\!+\!i0^+, \bm{k},\bm{k'}),\label{SeqST}
}
where $T(E\!+\!i0^+, \bm{k},\bm{k'})$ is the momentum-space representation of the $T$-operator $\mathcal{T}(\omega)$ which we specify later. 
 Energy is preserved in the scattering process, hence we can define an operator $\bm{S}(E)$ that describes the scattering process at energy $E$. In 1D, the momentum degeneracy is $2$ at any energy; $\bm{S}(E)$ is a $2\times 2$ matrix the same as in the case of positive even integer $m$ discussed in the main text. In 3D and higher dimensions,  there are uncountably many momentum eigenstates at the same energy $E>0$; $\bm{S}(E)$ is an integral operator in the momentum basis instead of a discrete, finite matrix. 
 
 In the familiar cases of  quadratic dispersion relation $a=2$ and $D=2,3$,  it is a common practice to choose the common eigenstates of the angular momentum operator and the kinetic energy operator as the basis states for the representation of the $S$-matrix. For example, when $D=3$, the angular momentum eigenbasis can be labelled by two integers, $l$ and $m_l$, where
$l = 0,1,2,\dots $ is called the angular momentum quantum number and $m_l=-l,-l+1,\dots, l$ the magnetic quantum number. In the angular momentum eigenbasis, the scattering operator at energy $E$ can be represented as a matrix, describing the transmission coefficients between different angular momentum eigenstates at energy $E$.   Note that the dispersion relation $|\bm{k}|^a$ shares the same eigenbasis as the quadratic dispersion relation $|\bm{k}|^2$, hence we can use the same angular momentum eigenbasis for the representation of the $S$-matrix.  
In the following, we give an overview of the angular momentum eigenstates in arbitrary dimensions.


In  $D\geq 2$ dimensional space  with Cartesian coordinates $\{z_1,\dots z_{D}\}$, we can introduce generalized polar coordinates $\{ r, \theta_1,\dots \theta_{D-1}\}$ such that $r=\sum_{i=1}^{D} z_i^2$ is the radial distance to the origin of the coordinate frame.  The set $\bm{\theta}=\{\theta_1,\dots \theta_{D-1}\}$  specifies coordinates on the surface of a $(D-1)$-sphere  \cite{louck1960theory, granzow1963n}.  
The $D$-dimensional total orbital angular-momentum operator is given by  $L^2\equiv - \nabla_{\bm{\theta}}^2$, where $\nabla_{\bm{\theta}}^2$ is the  Laplacian operator on the unit $(D-1)$-sphere---a partial differential operator defined purely in terms of  $\bm{\theta}$. The eigenvalues and eigenvectors of $L^2$ are given by 
\eq{
L^2Y_{l,q_l}(\bm{\theta})=l(l+D-2)Y_{l,q_l}(\bm{\theta}),
}
where $l=0, 1, 2,\dots$ is the generalization of the angular momentum quantum number to $D$ dimensions and $q_l=1,2,\dots N_l$ labels the degenerate eigenstates.  $Y_{l,q_l}(\bm{\theta})$ is the generalization of spherical harmonics to $D$ dimensions  \cite{efthimiou2014spherical}.  When $l=0$, $N_l=1$, i.e.~the eigenstate is non-degenerate.   When $l\geq 1$, $N_l=\frac{D+2l-2}{l}C^{D+l-3}_{l-1}$.  For example, when $D=2$, $N_l=2$ for $l\geq 1$; $Y_{l,q_l=1}(\theta)=(2\pi)^{-1/2}\exp(il\theta)$ and $Y_{l,q_l=2}(\theta)=(2\pi)^{-1/2}\exp(-il\theta)$, where $\theta=\arctan(z_2/z_1)$.  When $D=3$, $N_l=2l+1$ for $l\geq 1$; $q_l$ has a one-to-one correspondence with the magnetic quantum number $m_l=-l,-l+1,\dots l $.

The orthogonality relations of the spherical harmonics are given by 
\eq{
\int d\Omega \ Y^*_{l,q_l}(\bm{\theta})Y_{n,q_n}(\bm{\theta})=\delta_{nl} \delta_{q_l, q_n},
}
where $\int d\Omega $ is the integration over the solid angle of the $(D-1)$-sphere. 

  In scattering theory with $D\geq 2$, states with $l=0,1,2\dots $ are often referred to as s-waves, p-waves, d-waves..., respectively. When $D=1$, the dispersion relation is symmetric about $k=0$; the s-wave and  p-wave refer to the symmetric and antisymmetric combinations of the degenerate momentum eigenstates at a given energy, respectively.  In the main-text discussion of 1D systems, we have shown that  the scattering of the s-wave is decoupled from the p-wave when $E\rightarrow 0$; the s-wave transmission coefficient has a nontrivial limit $\exp(i\pi/a)$, while the p-wave transmission coefficient is $1$. The goal of this section is to generalize the zero-energy scattering behavior in 1D  to higher dimensions. Specifically, in systems with nonvanishing interactions at zero energy, s-wave scattering is decoupled from all other channels in the zero-energy limit in any dimension; the s-wave transmission coefficient goes to a universal limit  $\exp(2\pi i D/a)$ when $a>D$, while the scattering in other channels  $l\geq 1$ goes to a trivial limit---the identity matrix. 
  
  The different zero-energy behaviors for $l=0$ and $l\geq 1$ are due to the different behaviors of the radial wavefunctions $R_l(r)\sim (kr)^l$ at small $r$. The main idea is that,  when $E\rightarrow 0$, $R_l(r)$ goes to a constant at any finite $r$ for $l=0$ and vanishes for $l\geq 1$. Therefore, the s-wave experiences interactions at zero energy while the higher channels do not see the interactions. To substantiate the argument, let us compute $R_l(r)$ below. 


The eigenvalue equation for the kinetic energy operator at positive energy $E$ is given by 
\eq{
 \nabla^{2} \phi(\bm{r}) =k^2\phi(\bm{r}), \label{eqSDLaplaOrigin}
}
where $k\equiv E^{1/a}$ and $ \nabla^{2} $ is the Laplacian operator in $D$ dimension. In the spherical coordinate system $(r, \bm{\theta})$, we have
\eq{
 \nabla^{2}=r^{1-D}\frac{\partial}{\partial r}\left(r^{D-1}\frac{\partial}{\partial r}\right)-\frac{L^2}{r^2}. 
 \label{eqSlaplaceAngular}
}
Inserting the separable ansatz $\phi_{l,q_l,E}(\bm{r})=R_{l,k}(r)Y_{l,q_l}(\bm{\theta})$ into Eq.\ \eqref{eqSDLaplaOrigin} and using Eq.\ \eqref{eqSlaplaceAngular}, we obtain 
the radial equation 
\eq{
\left[\frac{d^2}{dr^2}+\frac{D-1}{r}\frac{d}{dr}-\frac{l(l+D-2)}{r^2}+k^2\right]R_{l,k}(r)=0. \label{eqSradialD}
}
Defining $R_{l,k}(r)=r^{-\frac{D-1}{2}}y(r)$, Eq.\ \eqref{eqSradialD} can be written as 
\eq{
\left[\frac{d^2}{dr^2}-\frac{l'(l'+1)}{r^2}+k^2\right]y(r)=0, \quad l'\equiv l+\frac{D-3}{2}, \label{eqSradialDy}
}
where $l'\geq 0$ for $D\geq 2$. When $D=1$,  the centrifugal term $\frac{l'(l'+1)}{2}$ vanishes and $R_l(r)=y(r)$;  when $D=3$, we have $l'=l$, $R_l(r)=y(r)/r$.
When $D\geq 2$, as $r\rightarrow 0$, the centrifugal term $\frac{l'(l'+1)}{2}$ dominates the energy term $k^2$, and the solutions behave like solutions of the corresponding equation with $E=0$; namely, like combinations of $r^{l'+1}$ and $r^{-l'}$. Thus, the physically acceptable wave function behaves like $r^{l'+1}$--- the Riccati-Bessel function $\hat{j}_{l'}(kr)$\cite{ARFKEN2013643}:
\eq{
\hat{j}_{l'}(z)&\equiv z j_{l'}(z)\equiv \sqrt{\frac{\pi z}{2}}J_{l'+\frac{1}{2}}(z)\\
&=z^{l'+1}\sqrt{\frac{\pi}{2 }}\sum_{n=0}^{\infty}\frac{(-z^2/2)^n}{n!2^{n+l'+1/2}\Gamma(l'+n+3/2)} \label{eqSriccati}
}
where $j_{l'}(z)$ is the spherical Bessel function, and $J_{\lambda}(z)$ the ordinary Bessel function. 
The Riccati-Bessel functions satisfy the following orthogonality relations
\eq{
\int_0^{\infty}dr \hat{j}_{l'}(k'r)\hat{j}_{l'}(kr)=\frac{\pi}{2}\delta(k-k'). \label{eqSorthoBessel}
}
Hence we obtain $R_{l,k}(r)$ and $\phi_{l,q_l,E}(r,\bm{\theta})$:
\eq{
R_{l,k}(r)&= \sqrt{\frac{2}{\pi }} r^{-\frac{D-1}{2}} \hat{j}_{l'}(kr), \label{eqRlk}\\
\phi_{l,q_l,E=k^a}(r,\bm{\theta})&=[\rho(E)\epsilon'(k)]^{-1/2}R_{l,k}(r)Y_{l,q_l}(\bm{\theta}), \nonumber\\
&=\sqrt{\frac{2}{\pi }}b(D)^{-1/2}(kr)^{-\frac{D-1}{2}}\hat{j}_{l'}(kr)Y_{l,q_l}(\bm{\theta}).
 \label{eqPhilqlE}
 }
Here, we have chosen normalization constants such that the following orthogonality and completeness relations are satisfied:
\eq{
&\braket{\phi_{l,q_l,E}|\phi_{n,q_n,E'}}
=\rho(E)^{-1} \delta(E-E')\delta_{l,n}\delta_{q_l,q_n},\label{Seqorthogon}\\
&\int_0^{\infty} dE\ \sum_{l,q_l}  \rho(E)\,  \ket{\phi_{l,q_l,E}}\bra{\phi_{l,q_l,E}}=\mathbb{1}, \label{SeqComplete}
}
where $\mathbb{1}$ is the identity operator in the Hilbert space of a particle in $D$ dimensions.

Using Eqs.\ \eqref{eqSriccati}, \eqref{eqRlk} and \eqref{eqPhilqlE},  we have, for $kr\ll 1$, 
\eq{
\phi_{l,q_l,E}(r,\bm{\theta})
=&b(D)^{-1/2} Y_{l,q_l}(\bm{\theta}) \frac{(kr)^{l}}{2^{l+D/2-1}
\Gamma(l+\frac{D}{2})}\{1+O[(kr)^2]\}.
}
Hence, using 
$ Y_{0,1}(\bm{\theta})=b(D)^{-\frac{1}{2}}$, we can derive the point-wise convergence
\eq{
\lim_{E\rightarrow 0^+}\phi_{l,q_l,E}(\bm{r})=\begin{cases}
b(D)^{-1}\frac{1}{2^{D/2-1}\Gamma(\frac{D}{2})}=\left(\frac{1}{\sqrt{2\pi}}\right)^D & l=0\\
0 & l\geq 1
\end{cases}, \label{eqSwavefunctionE0}
}
which confirms our earlier claim that the s-wave has a constant wavefunction at zero energy.

Eq.\ \eqref{eqSwavefunctionE0} is all we need to know about the angular momentum wavefunctions to derive the $S$-matrix universal limits. Eq.\ \eqref{eqSwavefunctionE0} implies that the $S$-matrix universal limit is only nontrivial for $l=0$; the quantum number $q_l$ plays no role in the  discussions. For simplicity and uniformity of notation with the 1D case, we will use a single variable $\alpha=1,2,\dots $ to denote the pair of quantum numbers $(l,q_l)$. In particular, $\alpha=1$ corresponds to $l=0$ and $\alpha=2,3,\dots$ correspond to states with $l\geq 1$.  The orthogonality relation in Eq.\ \eqref{Seqorthogon} can be rewritten as
\eq{
\braket{\phi_{\alpha,E}|\phi_{\beta,E'}}=\rho(E)^{-1} \delta(E-E')\delta_{\alpha,\beta}.
}


Finally, we are ready to define the $S$-matrix in the angular momentum basis in arbitrary dimension. 
In the basis $\{\ket{\phi_{\alpha,E}}\}$, the scattering operator at energy $E$ can be represented by a matrix $\bm{S}(E)$:
\eq{
\rho(E)\braket{\phi_{\beta,E}|\mathcal{S}|\phi_{\alpha,E'}}&\equiv\delta(E-E')S_{\alpha, \beta}(E),\nonumber \\
S_{\alpha, \beta}(E)&=\delta_{\alpha,\beta}-2\pi i
\rho(E) T(E\!+\!i0^+, \alpha, E, \beta, E),\label{eqSgvT1}
}
where $S_{\alpha,\beta}(E)$ is the matrix element of $\bm{S}(E)$ and
\eq{
 T(\omega, \alpha, E, \beta, E)\equiv \bra{\phi_{\beta, E}} \mathcal{T}(\omega) \ket{\phi_{\alpha,E}}.
}
Eq.\ \eqref{eqSgvT} can be compared to Eq.\ (5) in the main text for 1D systems.

\subsection{Universal scattering}

In this subsection, we consider emitter scattering for arbitrary integer spatial dimension $D\geq 1$ and dispersion relation $\epsilon(k)=|\bm{k}|^a$, where $a>0$ is not required to be an integer.   The Hamiltonian is given by 
\eq{
H&=H_0+V,\nonumber\\
H_0&=\int d^D k \ \epsilon(\bm{k})C^\dagger(\bm{k})C(\bm{k})+\sum_{i,j=1}^N K^R_{ij}b_{i}^{\dagger}b_{j},\nonumber \\
V&=\int d^D k \left[\sum_{i=1}^N V_i(\bm{k})C(\bm{k})  b^{\dagger}_{i}+\text{h.c}\right],\label{EqEffHamilsup}  
}
where we have either commutation or anti-commutation relations: $[C(\bm{k}),C^\dagger(\bm{k}')]_{\pm}=\delta(\bm{k}-\bm{k}'), [b_i, b^\dagger_j]_{\pm}=\delta_{ij}$.  We assume that the emitter-photon interaction has the form $\ket{v_{\bm{k}}}\equiv (V_1(\bm{k}),\dots V_N(\bm{k}))^T=V(\bm{k})\ket{u}$, where $\ket{u}$ is a unit vector. Let $\bm{0}$ be the null vector in dimension $D$.  We require that $V(\bm{k})$ is locally square-integrable and continuous at $\bm{k}=\bm{0}$ and that $V(\bm{0})$  is nonzero.

Similarly to the main text, we can define a $N\times N$ matrix $\bm{K}(\omega)$ describing the effective interactions between emitters:
\eq{
K_{ij}(\omega)=\int d^D k\ \frac{ V_i(\bm{k})V_j^*(\bm{k})}{\omega-|\bm{k}|^a}. \label{eqSKijD}
}
The momentum-space representation of the $T$-operator $\mathcal{T}(\omega)$ is given by
\eq{
T(\omega, \bm{k}, \bm{k}')=\braket{v_{ \bm{k}'} |\frac{1}{\omega\mathbb{1}_N-\bm{K}(\omega)-\bm{K}^R}|v_{\bm{k}}}.
} 
Since $V(\bm{k})$ is square-integrable,  its Fourier transform  $\tilde{V}(\bm{z})=\left(\frac{1}{\sqrt{2\pi}}\right)^D\int d^D k \exp(i\bm{k}\bm{z}) V(\bm{k})$ exists.  To find the representation of the $T$-operator in the basis $\{\ket{\phi_{\alpha,E}}\}$,  define vector $|v_{\alpha, E}\rangle=V_{\alpha,E}\ket{u}$, where 
\eq{
V_{\alpha,E}&\equiv \int d^D z\ \tilde{V}(\bm{z})\phi^*_{\alpha,E}(\bm{z}). \label{eqSValphaDefi}
}
The vector elements of $|v_{\alpha, E}\rangle$ represent the interaction coefficients between the emitters and the angular momentum mode $\alpha$ at energy $E$. 

The representation of the $T$-operator in the angular momentum basis is given by
\eq{
T(\omega,\alpha, E, \beta, E')=V^*_{\beta,E'}V_{\alpha,E}\braket{u |\frac{1}{\omega\mathbb{1}_N-\bm{K}(\omega)-\bm{K}^R}|u}. \label{eqTomegaDvv}
}
The $S$-matrix in the angular momentum basis is related to $T(\omega,\alpha, E, \beta, E')$ by  Eq.\ \eqref{eqSgvT}. The generalization of  1D universal scattering to arbitrary dimension $D$ and to all (including non-integer) $m > 0$ is given in the following theorem: 
\begin{theorem} \label{theoremHighDuniS}
Suppose $V(\bm{k})$ is a locally square-integrable function continuous at $\bm{k}=0$ and suppose $V(\bm{k}=0)\neq 0$.
 In the absence of bright zero-energy eigenstates, when $a\leq D$,  $\lim_{E\rightarrow 0^+}S_{\alpha,\beta}(E)=\delta_{\alpha,\beta}$;  when $a> D$, 
 \eq{
\lim_{E\rightarrow 0^+}S_{\alpha,\beta}(E)=\begin{cases}
\exp(2\pi i D/a) & \alpha=\beta=1, \\
\delta_{\alpha,\beta} & \mathrm{Otherwise}. \label{eqSSDtheorem}
\end{cases}
}

\end{theorem}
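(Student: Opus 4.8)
The plan is to reduce the $D$-dimensional, non-integer-$a$ problem to the scalar emitter-Green's-function analysis already carried out for Theorem~\ref{theorem1}, isolating the two places where $D$ and $a$ genuinely enter. First I would note that, since $\ket{v_{\bm k}}=V(\bm k)\ket{u}$, the matrix in Eq.~\eqref{eqSKijD} is again rank one, $\bm K(\omega)=\ket{u}\bra{u}K(\omega)$ with $K(\omega)=\int d^Dk\,|V(\bm k)|^2/(\omega-|\bm k|^a)$. The reductions of Lemmas~\ref{Lemma9maintext} and~\ref{lemmaBounds} used only this rank-one structure and the cofactor identity for $\bra{u}\bm H(\omega)^{-1}\ket{u}$, neither of which sees the spatial dimension, so they carry over verbatim: the absence of bright zero-energy eigenstates again forces $\det(\omega\mathbb{1}_N-\bm K^R)/\det(\omega\mathbb{1}_{N-1}-\bm K^R_{\cancel{11}})$ to have a finite limit, and
\[
\bra{u}\bm H(\omega)^{-1}\ket{u}=\Bigl(-K(\omega)+\frac{\det(\omega\mathbb{1}_N-\bm K^R)}{\det(\omega\mathbb{1}_{N-1}-\bm K^R_{\cancel{11}})}\Bigr)^{-1}
\]
near $\omega=0$.

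Next I would study $L(\omega)=\int d^Dk\,(\omega-|\bm k|^a)^{-1}$ and $K(\omega)$ near $\omega=0$ by the rescaling $\bm k=|\omega|^{1/a}\bm t$, which yields $L(\omega)=b(D)|\omega|^{D/a-1}I(\hat\omega)$ with $I(\hat\omega)=\int_0^\infty dt\,t^{D-1}/(\hat\omega-t^a)$ and $\hat\omega=\omega/|\omega|$, the radial integral converging exactly when $a>D$. For $a>D$ the weight $(\omega-|\bm k|^a)^{-1}$ concentrates at $\bm k=0$, so the Toeplitz argument of the main text generalizes to give $\lim_{\omega\to0}L^{-1}(\omega)K(\omega)=|V(0)|^2$ and hence, with the previous paragraph, the $D$-dimensional analog of Eq.~\eqref{eqSLTeta0S}, namely $\lim_{\omega\to0}L(\omega)\bra{u}\bm H(\omega)^{-1}\ket{u}=-1/|V(0)|^2$. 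For $a\le D$ the density of states does not diverge: when $a<D$, $K(0)$ is finite and $\bra{u}\bm H(\omega)^{-1}\ket{u}$ stays bounded, while at $a=D$ the logarithmic divergence of $K(\omega)$ instead forces $\bra{u}\bm H(\omega)^{-1}\ket{u}\to0$; the absence of bright states rules out any compensating blow-up in both subcases.

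I would then evaluate the angular coupling $V_{\alpha,E}$ of Eq.~\eqref{eqSValphaDefi} as $E\to0^+$. Using the pointwise limit Eq.~\eqref{eqSwavefunctionE0} with dominated convergence---the dominating function supplied by the uniform boundedness of the rescaled Riccati--Bessel factor $(kr)^{-(D-1)/2}\hat j_{l'}(kr)$---or equivalently by projecting $V$ onto the mode $\alpha$ on the shell $|\bm k|=E^{1/a}$ and invoking continuity of $V$ at the origin, I obtain $\lim_{E\to0^+}V_{\alpha,E}=V(0)\,\delta_{\alpha,1}$, since $\int d\Omega\,Y^*_{l,q_l}$ vanishes for $l\ge1$. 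Assembling these in Eq.~\eqref{eqSgvT1}, the correction term $2\pi i\,\rho(E)V^*_{\beta,E}V_{\alpha,E}\bra{u}\bm H(E+i0^+)^{-1}\ket{u}$ vanishes for $a\le D$ (where $\rho$ does not diverge and the Green's function is controlled), giving $\delta_{\alpha,\beta}$; for $a>D$ I would factor it as $\frac{\rho(E)}{L}\,V^*_{\beta,E}V_{\alpha,E}\,L\bra{u}\bm H^{-1}\ket{u}\to-\frac{\rho(E)}{L}\delta_{\alpha,1}\delta_{\beta,1}$, so only the $s$-wave survives and $S_{11}(0)=1+2\pi i\lim_{E\to0^+}\rho(E)/L(E+i0^+)$.

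The universal phase then comes from an explicit threshold integral: with $\rho(E)=\tfrac{b(D)}{a}E^{D/a-1}$ from Eq.~\eqref{eqSrhohigher} and $L(E+i0^+)=b(D)E^{D/a-1}I(e^{i0^+})$, the Sokhotski--Plemelj evaluation $\int_0^\infty u^{D/a-1}(1-u+i0^+)^{-1}du=\pi\cot(\pi D/a)-i\pi$ gives $I(e^{i0^+})=\tfrac{\pi}{a}[\cot(\pi D/a)-i]$, whence $S_{11}(0)=1+\tfrac{2i}{\cot(\pi D/a)-i}=\tfrac{\cot(\pi D/a)+i}{\cot(\pi D/a)-i}=e^{2\pi i D/a}$, which is Eq.~\eqref{eqSSDtheorem}. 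I expect the main obstacle to be making the interchange of limits in the $V_{\alpha,E}$ step fully rigorous: $V\in L^2$ does not by itself place $\tilde V\in L^1$, so the dominating function must exploit both the decay of the Riccati--Bessel tail and the stated hypotheses on $V$; in tandem, the Toeplitz concentration for $a>D$ must be justified uniformly as the approach direction $\hat\omega\to1^+$.
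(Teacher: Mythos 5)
Your proposal is correct and follows essentially the same route as the paper's proof: rank-one reduction of $\bm K(\omega)$ so that Lemmas \ref{Lemma9maintext} and \ref{lemmaBounds} carry over, the case split on $a<D$, $a=D$, $a>D$ via the behavior of $K(\omega)$, the Toeplitz concentration giving $\lim_{\omega\to0}L^{-1}(\omega)K(\omega)=|V(\bm 0)|^2$, the projection $\lim_{E\to0^+}V_{\alpha,E}=V(\bm 0)\delta_{\alpha,1}$, and the assembly $S_{11}(0)=1+2\pi i\lim\rho(E)/L(E+i0^+)$. The only cosmetic difference is that you evaluate the threshold integral by Sokhotski--Plemelj plus $\int_0^\infty u^{s-1}(1-u)^{-1}du=\pi\cot(\pi s)$, whereas the paper quotes the closed form Eq.~\eqref{SeqLzeta} obtained from the 1D calculation after the substitution $p=k^D$; the two expressions agree, since $\frac{\pi}{a}[\cot(\pi D/a)-i]=-\frac{2\pi i}{a}\,[1-e^{2\pi iD/a}]^{-1}$.
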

 \begin{proof}

 In the orthonormal basis where $\ket{u}$ is the first basis vector, the only nonzero matrix element of $\bm{K}(\omega)$ is $K_{11}(\omega)\equiv K(\omega)$:
\eq{
K(\omega)&\equiv\int d^D k\ \frac{ V(\bm{k})V^*(\bm{k})}{\omega-|\bm{k}|^a},\\
\bm{K}(\omega)&=\ket{u}\bra{u}K(\omega).
}
Using Eq.\ \eqref{eqTomegaDvv}, we have
 \eq{
\lim_{\omega\rightarrow 0}T(\omega,\alpha, E, \beta, E')=\lim_{\omega\rightarrow 0} V^*_{\beta,E'}V_{\alpha,E}\left(-K(\omega)+\frac{\det(\omega\mathbb{1}_N-\boldsymbol{K}^R)}{\det(\omega\mathbb{1}_{N-1}-\boldsymbol{K}^R_{\cancel{11}})}\right)^{-1}. \label{eqSsandH3}
}
It is easy to show that Lemma \ref{lemmaBounds}  applies to general dispersion relations and any dimension. Hence, the absence of bright zero-energy eigenstates implies that any null vector of  $\bm{K}^R_{\cancel{11}}$ corresponds to the null vector of $\bm{K}^R$.
  Using a similar argument as in Eq.\ \eqref{eqSdetratioK},  $\lim_{\omega\rightarrow 0}\frac{\det(\omega\mathbb{1}_N-\boldsymbol{K}^R)}{\det(\omega\mathbb{1}_{N-1}-\boldsymbol{K}^R_{\cancel{11}})}$ exists in the absence of bright zero-energy eigenstates. 
 
We first prove the theorem for cases when the S-matrix has a trivial zero-energy limit. When $a<D$, $\lim_{E\rightarrow 0^+}\rho(E)=0$ from Eq.\ \eqref{eqSrhohigher};  $\lim_{\omega\rightarrow 0}K(\omega)$ is a constant. Hence, 
$\lim_{\omega\rightarrow 0}T(\omega,\alpha, E, \beta, E')$ exists from Eq.\ \eqref{eqSsandH3}. Using Eq.\ \eqref{eqSgvT}, we can conclude that $\lim_{E\rightarrow 0^+}S_{\alpha,\beta}(E)=\delta_{\alpha,\beta}$ when $a<D$.

 When $a=D$, $\lim_{E\rightarrow 0}\rho(E)$ is finite from Eq.\ \eqref{eqSrhohigher}; $K(\omega)$ diverges logarithmically in the limit of $\omega\rightarrow 0$. Hence, we have $\lim_{\omega\rightarrow 0}T(\omega,\alpha, E, \beta, E')=0$ from Eq.\ \eqref{eqSsandH3}. Using Eq.\ \eqref{eqSgvT}, we see that $\lim_{E\rightarrow 0^+}S_{\alpha,\beta}(E)=\delta_{\alpha,\beta}$ when $a=D$.

We continue with the proof of nontrivial universal limit of $\bm{S}(E)$ when $a>D$.  
Similarly, as in the 1D case, define 
\eq{
L(\omega)=\int d^D k \frac{ 1}{\omega-|\bm{k}|^a}=b(D) \int_0^{+\infty} dk\ \frac{k^{D-1} }{\omega-k^a} =\frac{b(D)}{D}\int_0^{+\infty} dp \frac{1}{\omega-p^\zeta},
}
where, in the last equality, $\zeta=\frac{a}{D}$ and we have changed the integration variable from $k$ to $p=k^{D}$.  
The integral converges for $\zeta>1$, and the value of $L(\omega)$ is given by 
\eq{
L(\omega)=-\pi i \rho(|\omega|)\frac{2}{1-\exp(2\pi i/\zeta)}\exp\left(-i\theta \frac{\zeta-1}{\zeta}\right), \label{SeqLzeta}
}
which diverges at the same rate as the density of states $\rho(|\omega|)$ when $|\omega|\rightarrow 0$. Eq.\ \eqref{SeqLzeta} agrees with Eq.\ (13) in the main text for even $m=a$ and $D=1$.

When $\zeta>1$, following a standard relation in functional analysis, $\lim_{\omega\rightarrow 0}L(\omega)\frac{1}{\omega-|\bm{k}|^a}=\delta(\bm{k})$. We have
 \eq{
\lim_{\omega\rightarrow 0} K(\omega)L^{-1}(\omega)=|V(\bm{0})|^2\neq 0. \label{eqSKLD}
 }
Using Eqs.\  \eqref{eqSsandH3} and \eqref{eqSKLD}, we have, in the absence of bright zero-energy eigenstates, 
\eq{
\lim_{\omega\rightarrow 0}L(\omega)T(\omega,\alpha, E, \beta, E')=
-\frac{V^*_{\beta,E'}V_{\alpha,E}}{|V(\bm{k}=\bm{0})|^2}, \label{eqSLTarbD}
    }
which can be compared to Eq.\ \eqref{eqSLTeta0S} for the case of $D=1$. 
Using Eqs.\ \eqref{eqSwavefunctionE0} and \eqref{eqSValphaDefi}, we have
\eq{
\lim_{E\rightarrow 0}V_{\alpha,E}=\begin{cases}\left(\frac{1}{\sqrt{2\pi}}\right)^D\int d^D z \ \tilde{V}(\bm{z})=V(\bm{k}=\bm{0}) & \alpha=1,\\
0 & \alpha\geq 2,
\end{cases} \label{eqSTalphabeta}
}
where $\alpha=1$ corresponds to  $l=0$ and $\alpha\geq 2$ corresponds to  $l\geq 1$.
Using Eq.\ \eqref{eqSTalphabeta}, Eq.\  \eqref{eqSLTarbD}  becomes
 \eq{
\lim_{\omega\rightarrow 0}L(\omega)T(\omega,\alpha, E, \beta, E')=\begin{cases}
-1 & \alpha=\beta=1,\\
0 &  \text{ Otherwise.}
\end{cases} \label{eqSLThighDcases}
}
Using Eqs.\ \eqref{eqSgvT}, \eqref{SeqLzeta}, and \eqref{eqSLThighDcases}, we obtain Eq.\ \eqref{eqSSDtheorem}. We are done with the proof of Theorem \ref{theoremHighDuniS} for all values of $a>0$. 
\end{proof}

\subsection{Levinson's theorem}




 Levinson's theorem can also be generalized to $D\geq 2$. We define the determinant of the infinite-dimensional matrix $\bm{S}(E)$ through the $n\rightarrow \infty$ limit  of the series  $\det[\bm{S}_n(E)]$, where $\bm{S}_n(E)$ is the $n\times n$ submatrix of $\bm{S}(E)$ in the subspace of  $\alpha=0,1,\dots n-1$:
 \eq{
 \det[\bm{S}(E)]\equiv \lim_{n\rightarrow \infty} \det[\bm{S}_n(E)]. \label{SeqdetSlimit}
 }
As $l$ increases,  $\phi_{l,q_l,E}(r,\bm{\theta})\sim (rk)^l$  vanishes increasingly fast close to the scattering center because of the centrifugal barrier. This implies that modes with high angular momentum ($\alpha \rightarrow \infty$) have trivial scattering amplitudes and the limit in Eq.\ \eqref{SeqdetSlimit} exists.

\begin{theorem} \label{theoremLevinsonEmitterD} 
 Define the winding phase $\Delta \delta$ of $\det[\boldsymbol{S}(E)]$  similarly to the 1D case in Theorem
  \ref{theoremLevinsonEmitter}.
Suppose $\ket{v_{\bm{k}}}=V(\bm{k})\ket{u}$ satisfies the properties listed in Theorem \ref{theoremHighDuniS}, and the dispersion relation is given by $\epsilon(k)=|\bm{k}|^a$, where $m>0$.  We have, in the absence of bright zero-energy eigenstates and bound states in the continuum, 
\eq{
\Delta \delta=\begin{cases}
\pi (N-N_B)+\pi \frac{a-D}{a} &  a> D,\\ 
\pi (N-N_B)&  a\leq D,
\end{cases}\label{eqThLevImp1}
}
where $N$ is the number of emitters and $N_B$ is the number of bound states.

\end{theorem}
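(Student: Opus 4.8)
The plan is to reduce the $D$-dimensional problem to the contour-integral machinery already developed for 1D in Theorem \ref{theoremLevinsonEmitter}. The key observation is that, because the emitter coupling keeps the separable form $\ket{v_{\bm{k}}}=V(\bm{k})\ket{u}$, the scattering remains rank one in the emitter space, so the infinite number of angular-momentum channels causes no real difficulty. Explicitly, Eqs.\ \eqref{eqSgvT1} and \eqref{eqTomegaDvv} give $\bm{S}(E)=\mathbb{1}-2\pi i\,\braket{u|\bm{G}(E+i0^+)|u}\,\bm{a}\bm{a}^\dagger$ with $a_\alpha\propto V_{\alpha,E}$, i.e.\ the identity plus a rank-one perturbation. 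Hence the determinant defined by the limit in Eq.\ \eqref{SeqdetSlimit} is automatically well defined --- it is the single scalar $\det[\bm{S}(E)]=1-2\pi i\,\braket{u|\bm{G}(E+i0^+)|u}\sum_\alpha|a_\alpha|^2$ --- and, by the extension of the matrix determinant lemma already used in Theorem \ref{TheoremDetS}, it collapses onto the finite $N\times N$ emitter determinant. This yields $\det[\bm{S}(E)]=s(E+i0^+)=J(E+i0^-)/J(E+i0^+)$ with $J(\omega)=\det[\omega\mathbb{1}_N-\bm{K}^R-\bm{K}(\omega)]$ and $\bm{K}(\omega)=\ket{u}\bra{u}K(\omega)$, $K(\omega)=\int d^Dk\,|V(\bm{k})|^2/(\omega-|\bm{k}|^a)$, in exact parallel with the 1D case.

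Next I would set up the contour integral. For any $a>0$ the continuum spectrum is $\mathcal{R}_c=(0,+\infty)$, so the geometry matches the even-$m$ 1D case of Fig.\ \ref{subfig:quadratic_eye}: I deform Eq.\ \eqref{eqDeltaPhi} into a path hugging the positive real axis, then close it with a large circle $\mathcal{S}$ ($|\omega|\to\infty$) and a small circle $\mathcal{C}$ ($|\omega|\to 0$) encircling the threshold. The bound-state energies are precisely the zeros of $J(\omega)$ off the continuum, and the assumption of no bound states in the continuum places all $N_B$ of them strictly off $\mathcal{R}_c$ in the cut plane; the residue theorem then makes the single closed loop contribute $-2\pi N_B$ to $2\Delta\delta$, as in Eq.\ \eqref{eqSLSclosed}.

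The two arcs are evaluated exactly as in the 1D proof. On $\mathcal{S}$, the square-integrability of $V(\bm{k})$ forces $\lim_{|\omega|\to\infty}K(\omega)=0$, so $J(\omega)\sim\det(\omega\mathbb{1}_N-\bm{K}^R)\sim\omega^N$, and the dominated-convergence estimate of Eqs.\ \eqref{eqLimrinfpartialJ}--\eqref{eqSintegralSpmPolar2} gives $2\pi N$. On $\mathcal{C}$ the dimension dependence enters through $L(\omega)$ in Eq.\ \eqref{SeqLzeta}. For $a>D$ (i.e.\ $\zeta=a/D>1$) one has $K(\omega)\sim L(\omega)\to\infty$ and $J(\omega)\sim L(\omega)$; factoring $J=Lg$ as in Eq.\ \eqref{eqSdefig} isolates the winding, since $g$ has a continuous nonzero limit at the origin (which follows, as in Theorem \ref{theoremHighDuniS}, from the absence of bright zero-energy eigenstates via Lemmas \ref{Lemma9maintext} and \ref{lemmaBounds}) and winds by zero, while the $\exp[-i\theta(\zeta-1)/\zeta]$ factor of $L$ winds by $2\pi(a-D)/a$ around the arc. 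Summing $2\pi N-2\pi N_B+2\pi(a-D)/a$ and halving gives the first line of Eq.\ \eqref{eqThLevImp1}. For $a\le D$ the s-wave limit becomes trivial by Theorem \ref{theoremHighDuniS}, so $\det[\bm{S}(E)]$ starts and ends at $1$, the small arc contributes nothing, and one recovers $\Delta\delta=\pi(N-N_B)$.

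The main obstacle is the marginal case $a=D$, where $\rho(E)$ is finite but $K(\omega)$ diverges only logarithmically. Here $J(\omega)\sim -K(\omega)P_{N-1}(0)$ in the notation of Eq.\ \eqref{eqSJP}, with $K(\omega)\sim c\log\omega$, and I must check that $\log\omega$ produces no net winding around $\mathcal{C}$: writing $\log\omega=\log|\omega|+i\theta$ with $\log|\omega|\to-\infty$, the argument of $J(\omega)$ approaches a constant as $|\omega|\to 0$, so the arc contributes nothing --- consistent with the second line of Eq.\ \eqref{eqThLevImp1}. A secondary point is justifying that the exchange of the $n\to\infty$ determinant limit with the energy integral is legitimate; this is immediate from the rank-one structure of the first paragraph, which renders $\det[\bm{S}(E)]$ a genuine scalar at each $E$ rather than an infinite product that must be controlled channel by channel.
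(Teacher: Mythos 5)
Your proposal is correct and follows essentially the same route as the paper: establish $\det[\boldsymbol{S}(E)]=J(E-i0)/J(E+i0)$ via the (extended) matrix determinant lemma --- the paper's Theorem \ref{TheoremDetShigher}, which you reach through the rank-one structure of $\bm{S}(E)$ in the angular-momentum basis --- and then rerun the 1D contour argument of Theorem \ref{theoremLevinsonEmitter}, with the closed loop giving $-2\pi N_B$, the large circle $2\pi N$, and the small circle $2\pi(a-D)/a$ for $a>D$. The paper explicitly writes out only the determinant identity and declares the rest "straightforward," so your treatment of the $a\le D$ cases (in particular the logarithmic $a=D$ threshold) supplies detail the paper omits, but it is the same proof.
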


Theorem \ref{theoremLevinsonEmitterD} can be proven through a procedure similar to the one used in  
 Theorem \ref{theoremLevinsonEmitter} for 1D systems.   Below, we  provide the extension of Theorem \ref{TheoremDetS} to arbitrary dimension $D$. The rest of the proof is quite straightforward and we omit it here. 
\begin{theorem}\label{TheoremDetShigher}

Define $J(\omega)=\det[\omega \mathbb{1}_N-\boldsymbol{K}^R-\boldsymbol{K}(\omega)]$, where $\bm{K}(\omega)$ is defined in Eq.\ \eqref{eqSKijD}. 
When $E$ is not equal to the energy of a bound state in the continuum, we have
 \eq{
\det[\boldsymbol{S}(E)]=\frac{J(E-i0)}{J(E+i0)}. \label{eqSSShigher} 
 }
 \end{theorem}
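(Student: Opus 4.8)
The plan is to mirror the 1D proof of Theorem \ref{TheoremDetS} essentially line for line, the only genuinely new feature being that the on-shell scattering operator now acts on a countably infinite angular-momentum index $\alpha=1,2,\dots$ rather than on the finite degeneracy $n(E)$ of the 1D problem. The organizing observation is that, because of the separable structure $\ket{v_{\bm k}}=V(\bm k)\ket{u}$, every angular-momentum channel couples to the emitters only through the single vector $\ket{u}$, so the matrix built from the on-shell couplings is $N\times\infty$ but has rank at most one in the emitter space. I would exploit this with the extension of the matrix determinant lemma used in Eq.\ \eqref{eqSdoubledet} to trade the infinite-dimensional determinant $\det[\bm{S}(E)]$ for the determinant of a fixed $N\times N$ matrix, and only afterwards take the infinite limit.

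Concretely, I would first invoke the truncated definition \eqref{SeqdetSlimit}, $\det[\bm{S}(E)]=\lim_{n\to\infty}\det[\bm{S}_n(E)]$. For each finite $n$, the representation \eqref{eqTomegaDvv} of the $T$-operator gives $\bm{S}_n(E)=\mathbb{1}_n-2\pi i\,\bm{A}_n^\dagger\,\bm{G}(E+i0^+)\,\bm{A}_n$ exactly as in Eq.\ \eqref{eqSnTnexp}, where $\bm{A}_n$ is the $N\times n$ matrix whose $\alpha$-th column is $\sqrt{\rho(E)}\,V_{\alpha,E}\ket{u}$ and $\bm{G}(\omega)=[\omega\mathbb{1}_N-\bm{K}^R-\bm{K}(\omega)]^{-1}$. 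Applying the matrix determinant lemma as in Eq.\ \eqref{eqSdoubledet},
\eq{
\det[\bm{S}_n(E)]=\det[\mathbb{1}_N-2\pi i\,\bm{A}_n\bm{A}_n^\dagger\,\bm{G}(E+i0^+)],\nonumber
}
where the right-hand side is a determinant of a fixed $N\times N$ matrix that depends on $n$ only through $\bm{A}_n\bm{A}_n^\dagger=\rho(E)\big(\sum_{\alpha=1}^n|V_{\alpha,E}|^2\big)\ket{u}\bra{u}$. Thus the whole $n\to\infty$ limit reduces to convergence of the single scalar sum $\sum_\alpha|V_{\alpha,E}|^2$.

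The key ingredient I then need is the higher-dimensional analog of Eq.\ \eqref{eqSKpKm}: the discontinuity of $\bm{K}$ across the branch cut equals the energy-shell integral $\bm{K}(E+i0^+)-\bm{K}(E+i0^-)=-2\pi i\,\ket{u}\bra{u}\int d^Dk\,|V(\bm{k})|^2\,\delta(E-|\bm{k}|^a)$. Using the completeness relation \eqref{SeqComplete} together with Parseval's theorem and the definition \eqref{eqSValphaDefi} of $V_{\alpha,E}$, I would establish $\int d^Dk\,|V(\bm{k})|^2\,\delta(E-|\bm{k}|^a)=\rho(E)\sum_\alpha|V_{\alpha,E}|^2$. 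This simultaneously proves that the sum converges for almost every $E$ (square-integrability of $V$ forces the shell integral to be finite) and identifies $\lim_{n\to\infty}\bm{A}_n\bm{A}_n^\dagger=-\tfrac{1}{2\pi i}\big(\bm{K}(E+i0^+)-\bm{K}(E+i0^-)\big)$. Substituting back gives $\det[\bm{S}(E)]=\det[\mathbb{1}_N+(\bm{K}(E+i0^+)-\bm{K}(E+i0^-))\bm{G}(E+i0^+)]$; writing $\bm{H}(\omega)=\omega\mathbb{1}_N-\bm{K}^R-\bm{K}(\omega)$ and using $\bm{H}(E-i0)=\bm{H}(E+i0)+(\bm{K}(E+i0^+)-\bm{K}(E+i0^-))$ turns this into $\det[\bm{G}(E+i0)\,\bm{H}(E-i0)]=J(E-i0)/J(E+i0)$, which is Eq.\ \eqref{eqSSShigher}.

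The main obstacle I anticipate is making the passage to the infinite-dimensional limit rigorous rather than formal: in 1D the degeneracy was finite and no limit was required, whereas here I must justify exchanging $\lim_{n\to\infty}$ with the $N\times N$ determinant. This is legitimate precisely because the determinant lemma has already collapsed the infinite angular-momentum index into a single scalar $\sum_\alpha|V_{\alpha,E}|^2$, whose convergence is guaranteed by $V\in L^2$ and by the centrifugal suppression $\phi_{\alpha,E}\sim(kr)^l$ of high-$l$ modes near the origin [Eq.\ \eqref{eqSwavefunctionE0}]. A secondary subtlety, handled exactly as in Theorem \ref{TheoremDetS}, is that $\bm{G}(E+i0^+)$ must exist, so the argument must exclude the discrete set of energies $E_c$ at which $E_c\mathbb{1}_N-\bm{K}^R-\bm{K}(E_c+i0^+)$ is singular, i.e.\ the bound states in the continuum.
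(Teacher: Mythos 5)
Your proposal is correct and follows essentially the same route as the paper's proof: both rest on the finite-$n$ representation $\bm{S}_n(E)=\mathbb{1}_n-2\pi i\,\bm{A}_n^\dagger\bm{G}(E+i0^+)\bm{A}_n$, the matrix determinant lemma to collapse the angular-momentum index onto an $N\times N$ determinant, and the identification of $\lim_{n\to\infty}\bm{A}_n\bm{A}_n^\dagger$ with the branch-cut discontinuity of $\bm{K}(\omega)$ as in Eq.\ \eqref{eqSKpKmhigher}. Your explicit use of the rank-one structure $\bm{A}_n\bm{A}_n^\dagger=\rho(E)\bigl(\sum_{\alpha\leq n}|V_{\alpha,E}|^2\bigr)\ket{u}\bra{u}$ to reduce the interchange of limits to convergence of a single scalar sum is a welcome sharpening of a point the paper leaves implicit, but it is a refinement of the same argument rather than a different one.
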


\begin{proof}
The proof follows the same procedure as the proof for Theorem \ref{TheoremDetS}. 
 Construct $\bm{A}_n$ as a $N\times n$ matrix and $\bm{A}_n^\dagger$ its Hermitian conjugate: 
\eq{
 \bm{A}_n&=\rho^{1/2}(E)\left(\begin{matrix}
 |v_{\alpha=0, E}\rangle, \dots
|v_{\alpha=n-1,E}\rangle
\end{matrix}\right). \quad 
} Then the $n\times n$ matrix 
$\bm{S}_n(E)$ for $E\neq E_c$ can be written as 
\eq{
\bm{S}_n(E)=\mathbb{1}_{n}-2\pi i \bm{A}_n^\dagger \boldsymbol{G}(E+i0^+) \bm{A}_n,
} 
where $\mathbb{1}_{n}$ is the identity matrix of dimension $n$ and $\bm{G}(\omega)\equiv(\mathbb{1}_N-\bm{K}(\omega)-\bm{K}^R)^{-1}$. Using the properties of determinant, the definition of $J(\omega)$, and
the identity 
 \eq{
 \boldsymbol{K}(E+i0^+)-\boldsymbol{K}(E+i0^-)&= \sum_{\alpha} \int_{0}^{+\infty}dE'\ \rho(E')|v_{\alpha,E'}\rangle \langle v_{\alpha, E'}|\left(\frac{1}{E+i0^+-E'}-\frac{1}{E-i0^--E'}\right),\nonumber\\
 &=-2\pi i \sum_{\alpha} \rho(E)|v_{\alpha,E}\rangle \langle v_{\alpha, E}|=-2\pi i \lim_{n\rightarrow \infty}\bm{A}_n\bm{A}_n^\dagger, \label{eqSKpKmhigher}
 }
  the r.h.s of Eq.\ \eqref{eqSSShigher} can be written as 
 \eq{
\frac{J(E-i0)}{J(E+i0)}=\det\left[\mathbb{1}_N-2\pi i \lim_{n\rightarrow 0}\bm{A}_n\bm{A}_n^\dagger\boldsymbol{G}(E+i0^+)\right].\label{eqdetratioexp1}
}
According to the matrix determinant lemma, given an invertible $N\times N$ matrix $-2\pi i\boldsymbol{G}(E+i0)$ and a  $N\times n(E)$ matrix $\bm{A}_n$, 
\eq{
\det[\mathbb{1}_N-2\pi i\bm{A}_n\bm{A}_n^\dagger\boldsymbol{G}(E+i0^+)]=\det[\mathbb{1}_{n}-2\pi i\bm{A}_n^\dagger\boldsymbol{G}(E+i0^+)\bm{A}_n] \label{eqSdoubledethigher}.
}
Using Eqs.\ \eqref{SeqdetSlimit}, \eqref{eqSnTnexp} and \eqref{eqdetratioexp1}, we see that the l.h.s and r.h.s of Eq.\ \eqref{eqSdoubledethigher} are equal, respectively, to $\det[\boldsymbol{S}(E)]$ and $\frac{J(E-i0)}{J(E+i0)}$  in the limit of $n\rightarrow \infty$. This is the end of the proof of Theorem \ref{TheoremDetShigher}.

\end{proof}

\section{Separable potential scattering  \label{SSecDelta}}
The purpose of this Letter is to demonstrate the principle that divergent density of states leads to nontrivial universal behavior of the $S$-matrix. To demonstrate that this principle is not limited to emitter-photon interactions, in this section, we show that the $S$-matrix has the same universal limit  in a particular class of potential scattering. To be specific, we study separable potential scattering.  Seperable potentials generalize delta-function potential scattering and serve as an important effective model to describe the long-distance behavior in many scattering systems. 

Assume any integer spatial dimension $D\geq1$.  The time-independent Schrodinger equation in momentum space is given by
\eq{
\epsilon(\bm{k})\psi(\bm{k})+ \,  \int d^D k'\ V(\bm{k'},\bm{k}) \psi(\bm{k'}) = E \psi(\bm{k}),
}
where the dispersion relation $\epsilon(\bm{k})$ is any of those being considered for emitter scattering in 1D in the main text or for arbitrary $D$ in Sec.\ \ref{SSecGen}.
 For local potentials, $V(\bm{k'},\bm{k})=V(\bm{k'}-\bm{k})$ depend only on the momentum difference  $\bm{k'}-\bm{k}$, where $\bm{k}~(\bm{k'})$ is the incoming (outgoing) momentum of the incident particle. 
For scattering with a separable potential,  the potential is non-local in real space and takes the form $\tilde{V}(\bm{z'},\bm{z}) =  g \tilde{v}(\bm{z'}) \tilde{v}(\bm{z})$, where $\bm{z}~(\bm{z'})$ is the incoming (outgoing) position of the incident particle and $\tilde{v}(\bm{z})$ is normalized such that $\left(\frac{1}{\sqrt{2\pi}}\right)^D\int d^D z \, \tilde{v}(\bm{z}) = 1$. Let $v(\bm{k})$ be the Fourier transform of $\tilde{v}(\bm{z})$; it is clear that $v(\bm{k})$ is uniformly continuous and $v(\bm{k}=\bm{0})=1$. The separable potential in momentum space takes the form $V(\bm{k'},\bm{k}) =   g \tilde{v}(\bm{k'}) \tilde{v}(\bm{k})$;  the time-independent Schrodinger equation  can then be written as 
\eq{
\epsilon(\bm{k})\psi(\bm{k})+ g\, v(\bm{k}) \int d^D k'  v(\bm{k'}) \psi(\bm{k'}) = E \psi(\bm{k}).
}
In potential scattering, the $T$-matrix can be solved from the Lippmann-Schwinger equation:
\eq{
T_{\text{sep}}(\omega,\bm{k},\bm{k'})=g\, v(\bm{k}) v(\bm{k'}) +g\, v(\bm{k}') \int d^D k'' \frac{v(\bm{k''}) }{\omega-\epsilon(\bm{k''})} T_{\text{sep}}(\omega,\bm{ k},\bm{k''}). \label{eqSTdelta}
}
Solving Eq.\ \eqref{eqSTdelta}, we obtain the solution for the  $T$-matrix:
 \eq{
T_{\text{sep}}(\omega,\bm{k},\bm{k'})&=\frac{ v(\bm{k}) v(\bm{k'})}{g^{-1}-K(\omega)}, \\
K_{\text{sep}}(\omega)&\equiv\int d^D k \frac{v(\bm{k})^2}{\omega-\epsilon(\bm{k})},
}
where the momentum dependence is simply captured by the momentum dependence of the potential. The scattering operator is related to the $T$-matrix through Eq.\ \eqref{SeqST}. 
Let us compare the $T$-matrix for separable potential scattering to single-particle emitter scattering with $K^R=0$ and $V_1(\bm{k})=v(\bm{k})$:
 \eq{
T(\omega,\bm{k},\bm{k'})&=\frac{ v(\bm{k}) v^*(\bm{k'})}{\omega-K(\omega)}, \\
K(\omega)&=\int d^D k \frac{|v(\bm{k})|^2}{\omega-\epsilon(\bm{k})}.
}
The similarity between $T_{\text{sep}}(\omega,\bm{k},\bm{k}')$ and $T(\omega,\bm{k},\bm{k}')$ allows the $S$-matrix universal limits to be similarly derived for separable potential scattering. To be specific, the zero-energy limit of the $S$-matrix behaves identically to the single-emitter scattering for any dispersion relation studied in earlier sections of the paper.  

The Levinson's theorem can also be generalized to separable potential scattering. 
Defining the $S$-matrix $\bm{S}_{\text{sep}}(E)$ similarly to $\bm{S}(E)$ in the case of emitter scattering, it is easy to derive a generalization of Theorems \ref{TheoremDetS} and \ref{TheoremDetShigher} to separable potential scattering.  In any dimension, we have
\eq{
\det[\bm{S}_{\text{sep}}(E)]=\frac{g^{-1}-K_{\text{sep}}(E-i0)}{g^{-1}-K_{\text{sep}}(E+i0)}.
}
The solutions of $g^{-1}-K_{\text{sep}}(\omega)=0$ correspond to bound state energies.  Define $\Delta \delta_{\text{sep}}$ as the difference of the scattering phase of $\det[\bm{S}_{\text{sep}}(E)]$ between the two ends of the continuum spectrum similarly to Eq.\ \eqref{eqDeltaPhi}. Following a proof similar to that of Theorem \ref{theoremLevinsonEmitter},  we have, for dispersion relation $\epsilon(k)=\pm |d|k^m$ in 1D,
\eq{
\Delta \delta_{\text{sep}}=-\pi N_B+\pi \frac{m-1}{m},
}
where  $N_B$ is the number of bound states.
For dispersion relation $\epsilon(\bm{k})=|\bm{k}|^a$ in integer dimension $D\geq 1$,
\eq{
\Delta \delta_{\text{sep}}=\begin{cases}
-\pi N_B+\pi \frac{a-D}{a} &  a> D,\\ 
-\pi N_B&  0<a\leq D.
\end{cases}
}
}



\end{document}